\theoremstyle{definition}\newtheorem{definition}{Definition}
\theoremstyle{plain}\newtheorem{theorem}{Theorem}
\theoremstyle{definition}\newtheorem{remark}{Remark}
\theoremstyle{definition}
\theoremstyle{plain}
\theoremstyle{plain}\newtheorem{corollary}{Corollary}
\theoremstyle{plain}\newtheorem{lemma}{Lemma}
\theoremstyle{plain}
\begin{document}
\begin{frontmatter}
\title{Potential of quantum finite automata with exact acceptance}

\author{Jozef Gruska$^{1}$}
\author{Daowen Qiu$^{2}$ }
\author{Shenggen Zheng$^{1,}$\corref{one}}

 \cortext[one]{Corresponding author.\\ \indent{\it E-mail addresses:} zhengshenggen@gmail.com (S. Zheng), gruska@fi.muni.cz (J. Gruska),  issqdw@mail.sysu.edu.cn (D. Qiu).}

\address{

  $^{1}$Faculty of Informatics, Masaryk University, Brno 60200, Czech Republic\\

  $^2$Department of Computer Science, Sun Yat-sen University,
Guangzhou 510006,
  China\\
}

\begin{abstract}

The potential of the exact quantum information processing is an interesting, important and intriguing issue. For examples, it has been believed that quantum tools can provide significant, that is larger than polynomial,  advantages  in the case of exact quantum computation only, or mainly, for  problems with very special structures. We will show that this is not the case.

In this paper the potential of quantum finite automata producing outcomes not only with a (high) probability, but with certainty (so called exactly) is explored in the context of their uses for solving promise problems and with respect to the size of automata. It is shown that for solving particular classes $\{A^n\}_{n=1}^{\infty}$ of promise problems, even those without some very special structure, that  succinctness of the exact quantum finite automata under consideration, with respect to the number of (basis) states,  can be very small (and constant) though it grows proportional to $n$ in the case  deterministic finite automata (DFAs) of the same power are used. This is here demonstrated also for the case that the component languages of the promise problems solvable by DFAs are non-regular. The method used can be applied in finding more exact quantum finite automata or quantum algorithms for other promise problems.
\end{abstract}

\begin{keyword}
Exact quantum computing\sep Quantum finite automata\sep  Promise problems \sep State succinctness

\end{keyword}

\end{frontmatter}

\section{Introduction}\label{sec-intr}

 Because of the probabilistic nature of quantum measurement and a need to have
a quantum measurement involved at any quantum computation of classical problems,
it used to be assumed that for having a significantly superior power of
quantum computation we need to pay by having its correct outcomes only with
some (high) probability.
In other words, it was assumed that we cannot have exponential or even larger increases of quantum power in the case of the outcomes with probability 1 (so called exact computation) are
required.

After the discovery of the exact algorithm for Deutsch's problem \cite{DJ92} and Simon's
problem \cite{Sim97}, it started to be of large interest to get a deeper insight into
the potential of exact quantum computation and communication \cite{BH97,Buh98,BCWZ99,CEMM98,Kla00,MNYW05}. Especially in the recent years, the exact quantum computation of total functions and partial functions (promise problems) has been studied  \cite{AmYa11,Amb13,AISJ13,AGZ14,GDZ14,MJM11,MNYW05,Nak14,RY14,Zhg13} and also  a new
hypothesis has appeared. Namely, that exact, and significantly more powerful than in classical case, quantum computation, is possible only/mainly in case of the  problems having some very special structures.

There are many ways to explore the power of  exact quantum information processing. One of the most basic  approaches is on the level of quantum finite
automata (of various types) and with respect to their sizes in comparison with
the exact classical (deterministic)  finite automata.

The acceptance of languages has been used to be one of the main ways to get an insight into the power of various models of finite automata. Recently, a more general approach,
through considerations of so called promise problems $A=(A_{yes}, A_{no})$, where $A_{yes}$ and $A_{no}$ are disjoint languages, has turned out to be of interest for several reasons. One of them is that this also allows to distinguish ``acceptance" and ``solvability" as two different modes of actions. Second one is that this allows to consider for finite automata also the cases that $A_{yes}$ and $ A_{no}$ are not regular.

The
promise problems, studied in the exact quantum computing before, used to have two very closely related
parameters, as $n$ and $n/2$ or so \cite{AmYa11,Buh98,DJ92}.
In this paper we demonstrate that  for promise problems with no very special structure, i.e., the problems that we  deal with will only have very
loosely related parameters as $N$ and $l$ for any $l<N$.  The quantum exact mode can be, comparing with the classical case of deterministic finite automata (DFAs), even more than exponentially succinct.  This will be demonstrated for the case of a unary (alphabet) promise problem  and also for the case of a binary promise problem $A$ whose components $A_{yes}, A_{no}$ are
even non-regular languages.

Klauck \cite{Kla00} proved, for any regular language $L$,  that the state complexity  of an exact one-way quantum finite automaton accepting $L$ is not less than the state complexity of an equivalent DFA.
However, the situation is very different for some promise problems  \cite{AmYa11,GDZ14,Zhg13}.

A {\it promise problem} over an alphabet $\Sigma$ is a pair $A = (A_{yes}, A_{no})$, where $A_{yes}\subset \Sigma^*$ and $A_{no}\subset \Sigma^*$
are disjoint \cite{Gh06}.
Let $A_{yes}^k=\{a^{i\cdot2^{k+1}}\,|\, i>0 \}$ and  $A_{no}^k=\{a^{i\cdot2^{k+1}+2^k}\,|\, i\geq 0 \}$, where $k$ is any  positive integer.
Ambainis and Yakary{\i}lmaz \cite{AmYa11} proved that any  promise problem $A^k=(A_{yes}^k,A_{no}^k)$
 can be solved exactly by a  measure-once one-way quantum finite automaton (MOQFA)  with two quantum basis states, whereas
the sizes of  the corresponding DFAs are at least $2^{k+1}$.

 In order to get a quantum speed-up (space efficiency) larger than polynomial in the exact quantum computing, it used to be believed that the problem must have a  special structure. For example, in case of  the
  Deutsch-Jozsa promise problem \cite{DJ92}, the distributed  Deutsch-Jozsa promise problem \cite{Buh98}  and the promise problem studied in \cite{AmYa11} could be of such a type\footnote{
  In the Deutsch-Jozsa promise problem, the task is to determine whether an input $x\in\{0,1\}^n$ has the Hamming weight $0$ or $n$ (this is constant) or $\frac{n}{2}$ (this is balanced).
  In the distributed {\em  Deutsch-Jozsa promise problem}, two parties are to determine whether their respective strings $x,y\in\{0,1\}^n$ have the Hamming distance $0$  or $\frac{n}{2}$.     The promise problem studied by Ambainis and Yakary{\i}lmaz \cite{AmYa11} has two parameters, i.e. one is $n$ in $A_{yes}^k$ and the other one $\frac{n}{2}$ in $A_{no}^k$. We say that a promise problem has very special structure if its parameters have very special relationship. }.

We show in this paper that similar results hold also for some of those promise problems where parameters in  $A_{yes}$ and  $A_{no}$ are loosely related.
 To show that we  consider a unary promise problem  $A^{N,\,l}=(A_{yes}^{N,\,l},A_{no}^{N,\,l})$ with  $A_{yes}^{N,\,l}=\{a^{iN}\,|\,\ i\geq 0\}$ and $A_{no}^{N,\,l}=\{a^{iN+l}\,|\,\ i\geq 0\}$, where $N$ and $l$ are  fixed positive integers such that $0< l<N$. We prove that the promise problem $A^{N,\,l}$ can be solved exactly by a three quantum  basis states MOQFA. Note that if we choose $N= 2^{k+1}$ and $l=N/2=2^k$, then $A^{N,\,l}$ is  the promise problem   studied in \cite{AmYa11}.
We  also determine the state complexity of DFA solving the above promise problems.  We prove that for any fixed  $N$ and $l$,  the minimal DFA solving the promise problem $A^{N,\,l}$ has $d$ states,  where $d$ is the  smallest integer such that $d\mid N$ and $d\nmid l$. If $N$ is a prime, then the minimal DFA solving the promise problem $A^{N,\,l}$ has $N$ states.
 Finally, we consider even a more general promise problem. Namely, $A^{N,r_1,r_2}=(A_{yes}^{N,r_1},A_{no}^{N,r_2})$ with  $A_{yes}^{N,r_1}=\{a^{n}\,|\,\ n \equiv r_1\ {\it mod}\  N\}$ and $A_{no}^{N,r_2}=\{a^{n}\,|\,\ n \equiv r_2\ {\it mod}\  N\}$, where $N$, $r_1$ and $r_2$ are  fixed positive integers such that $r_1\not\equiv r_2\ {\it mod}\  N$. (If we choose $r_1=0$ and $r_2=l$, then $A^{N,r_1,r_2}$ is  the promise problem $A^{N,\,l}$ given above.)
We  prove also that the promise problem $A^{N,r_1,r_2}$ can be solved exactly by an MOQFA with three quantum basis states.  Let  $l=(r_2-r_1) \ {\it mod}\  N$. We show also that the size of the minimal  DFA for the promise problem $A^{N,r_1,r_2}$ is the same as the size of  the minimal DFA  for the promise problem $A^{N,\,l}$.

We consider afterwards  two   binary promise problems. For any positive integer $l$, let  $B_{yes}^{l}=\{a^ib^{i}\mid i\geq 0\}$ and $B_{no}^{l}=\{a^ib^{i+l}\mid i\geq 0\}$. It is easy to see  that both $B_{yes}^{l}$ and $B_{no}^{l}$ are {\em nonregular languages}. We  prove that the promise problem $B^{l}=(B_{yes}^{l},B_{no}^{l})$ can be solved by an exact MOQFA and also by a DFA.  In particular, we  prove that the promise problem $B^{l}$ can be solved exactly by a 2  quantum basis states MOQFA ${\cal M}_{l}$, whereas the corresponding minimal DFA solving this promise problem has $d$ states, where $d$ is the smallest integer such that $d\nmid l$. Furthermore, we consider a more general binary promise problem with $B_{yes}^{N,\,l}=\{a^ib^{i}\mid i\geq 0\}$ and $B_{no}^{N,\,l}=\{a^ib^{i+jN+l}\mid i,j\geq 0\}$, where $N$ and $l$ are fixed nonnegative integers such that  $0<l<N$. We  prove that each promise problem $B^{N,\,l}=(B_{yes}^{N,\,l},B_{no}^{N,\,l})$ can be solved exactly by a 3 quantum  basis states MOQFA ${\cal M}_{N,\,l}$, whereas the corresponding minimal DFA  has  $d$ states, where $d$ is the  smallest integer such that $d\mid N$ and $d\nmid l$. Note that if $N$ is a prime, then $d=N$.

 The paper is structured as follows. In Section 2 some of the  required basic concepts and notations are introduced and  the models   used are
described in  detail.   The results on unary promise problems are given in Section 3. The results on binary promise problems are given in Section 4.   Section 5 contains conclusion and discussion.

\section{Preliminaries}

Quantum finite automata were first introduced by  Kondacs and Watrous \cite{Kon97} and by Moore and Crutchfields \cite{Moo97}. Since that time they have been explored in \cite{Amb02,Bro99,Hir10,LiQiu09,Yak11,Yak10,ZhgQiu112,Zhg12,ZhgQiu11,Zhg13a} and in other papers. The state complexity of quantum finite automata is one of the  interesting topics in studying the power of quantum finite automata.  In the past twenty years, state complexity of several variants of quantum finite automata has been intensively studied  \cite{Amb98,AmbNay02,Amb09,AmYa11,BMP03,Ber05,BMP06,Fre09,Gru00,GDZ14,Qiu14,Zhg12,Zhg13,Zhg14}.

We now recall  some basic concepts    and  notations concerning  finite automata. For  quantum information processing and more on  finite automata we refer the reader to \cite{Gru99,Hop,Nie00,Qiu12}.

\begin{definition}
A {\it deterministic finite automaton} (DFA) $\mathcal{A}$ is specified by a 5-tuple
\begin{equation}
\mathcal{A}=(S,\Sigma,\delta,s_0, F),
\end{equation}
where:
\begin{itemize}
\item $S$ is a finite set of classical states;
\item $\Sigma$ is a finite set of input symbols;
\item $s_{0}\in S$ is the initial state of the automaton;
\item $F\subset S$ is the set of accepting states;
\item $\delta$ is a transition function:
\begin{equation}
\delta:S\times\Sigma \rightarrow S.
\end{equation}
\end{itemize}
\end{definition}

For any $x\in\Sigma^*$ and $\sigma\in \Sigma$, we define
\begin{equation}
    \widehat{\delta}(s,x\sigma)=\widehat{\delta}(  \widehat{\delta}(s,x), \sigma)
\end{equation}
and if $x$ is the empty string, then
 \begin{equation}
    \widehat{\delta}(s,x\sigma)=\delta(s,\sigma).
\end{equation}
 The automaton $\mathcal{A}$
accepts the string $x$ if $\widehat{\delta}(s_0,x)\in F$, otherwise rejects.

 A promise problem $A = (A_{yes}, A_{no})$ is said to be solved  by  a DFA  ${\cal A}$ if
\begin{enumerate}
\item[1.] $\forall x\in A_{yes}$, $\widehat{\delta}(s_0,x)\in F$, and
\item[2.] $\forall x\in  A_{no}$, $\widehat{\delta}(s_0,x)\not\in F$.
\end{enumerate}

\begin{definition}A measure-once quantum finite automaton (MOQFA) ${\cal M}$ is specified by a 5-tuple
 \begin{equation}
{\cal M}=(Q,\Sigma,\{U_{\sigma}\,|\, \sigma\in\Sigma' \},|{0}\rangle,Q_a)
\end{equation}
where:

\begin{itemize}
\item  $Q$ is a finite set of orthonormal quantum (basis) states, i.e. $\{|i\rangle\mid 0\leq i< |Q|\}$;

 \item $\Sigma$ is a finite alphabet of input symbols, which does not contain end-markers.
$\Sigma'=\Sigma\cup \{|\hspace{-1.5mm}c,\$\}$ {\em (}where $|\hspace{-1.5mm}c$ will be used as the left end-marker and $\$$ as the right end-marker{\em )};
\item $|0\rangle\in Q$ is the initial quantum state;

\item $Q_a \subseteq Q$ denotes the set of
accepting basis states;

\item $U_{\sigma}$'s are unitary operators for $\sigma\in\Sigma'$.
\end{itemize}

The quantum state space of this model is a $|Q|$-dimensional Hilbert space denoted ${\cal H}_Q$. Each quantum basis state $|i\rangle$  in ${\cal H}_Q$ is represented by a column vector with the $(i+1)$th entry being $1$ and the other entries being $0$.
With this notational convenience we can describe the above model as follows:
\begin{enumerate}
  \item  The initial state $|0\rangle$ is represented as $|q_0\rangle=(1,\overbrace{0,\cdots,0}^{|Q|-1})^\mathrm{T}$.
  \item  The accepting set $Q_a$ corresponds to the projective operator $P_{acc}=\sum_{|i\rangle\in Q_a}|i\rangle\langle i|$.
\end{enumerate}

The computation of an MOQFA ${\cal M}$ on an input string
$x=\sigma_{1}\sigma_{2}\cdots\sigma_{n}\in\Sigma^{*}$ is as
follows:  ${\cal M}$ reads the input string from the left end-marker to the right end-marker,  symbol by symbol, and  the unitary matrices $U_{|\hspace{-1mm}c}, U_{\sigma_1},U_{\sigma_2},\cdots,U_{\sigma_n},U_{\$}$ are applied, one by one, on the current state, starting with  $|0\rangle$ as the initial state. Finally, the projective
measurement $\{P_{acc}, I-P_{acc}\}$ is performed on the final state, in order to accept or reject
the input. Therefore, for the input string $x$,  ${\cal M}$ has the accepting probability given by
\begin{equation}
Pr[{\cal M}\ \mbox{accepts}\  x] =\|P_{acc}U_{\$}U_{\sigma_n}\cdots U_{\sigma_2}U_{\sigma_1}U_{|\hspace{-1mm}c}|0\rangle\|^2.
\end{equation}
\end{definition}

 A promise problem $A = (A_{yes}, A_{no})$ is solved exactly by  an MOQFA ${\cal M}$ if
\begin{enumerate}
\item[1.] $\forall x\in A_{yes}$, $Pr[{\cal M}\  \mbox{accepts}\  x]=1$, and
\item[2.] $\forall x\in  A_{no}$, $Pr[{\cal M}\ \mbox{accepts}\  x]=0$.
\end{enumerate}

\section{Unary promise problems}\label{s-unary}

In this section we prove that a specific unary promise problem can be solved exactly by  an  MOQFA with a small (constant) number of quantum states, but any DFA solving the same promise problem has to have much larger  number of states.   To show that we make use of a special technique introduced by Ambainis \cite{Amb13}.  Based on Ambainis' idea, we introduce a definition and deal with a lemma first.

\begin{definition}
Let $p\in[-1,0]$. A quantum machine   ${\cal M}$ with initial state $|0\rangle$   $p$-solves promise problem  $A=(A_{yes},A_{no})$ if,
\begin{enumerate}
\item $\forall x\in A_{yes}$, $U_x|0\rangle=|0\rangle$,
\item $\forall x\in  A_{no}$, $U_x|0\rangle=p|0\rangle+\sqrt{1-p^2}|\psi_x\rangle$
\end{enumerate}
where   $|\psi_x\rangle\bot|0\rangle$ and the unitary operator $U_x$ is the action of the machine ${\cal M}$ corresponding to the input $x$.
\end{definition}
Hence  if $x\in A_{yes}$, then the amplitude of $|0\rangle$ in the quantum state $U_x|0\rangle$ is 1.  If $x\in  A_{no}$, then the amplitude of $|0\rangle$ in the quantum state $U_x|0\rangle$ is  $p$.

\begin{equation}\label{e-7}
    \left(
      \begin{array}{ccc}
        1  \\
        0  \\
        \vdots\\
        0 \\
      \end{array}
    \right)
      \begin{array}{ccc}
        {}  \\
        U_x  \\
        \longrightarrow \\
        x\in A_{yes}\\
      \end{array}
      \left(
      \begin{array}{ccc}
        1  \\
        0  \\
        \vdots\\
        0 \\
      \end{array}
    \right)
    \mbox{\ and\ } \left(
      \begin{array}{ccc}
        1  \\
        0  \\
        \vdots\\
        0 \\
      \end{array}
    \right)
      \begin{array}{ccc}
        {}  \\
        U_x  \\
        \longrightarrow \\
        x\in A_{no}\\
      \end{array}
      \left(
      \begin{array}{ccc}
        p  \\
        *  \\
        \vdots\\
        * \\
      \end{array}
    \right).
\end{equation}

Now we  add one entry (as the first one) to the above vectors,  then start with a new initial quantum state, and change the unitary $U_x$ as followings:
\begin{equation}\label{e-8}
    \left(
      \begin{array}{ccc}
        \alpha  \\
        \beta  \\
         0 \\
        \vdots\\
        0 \\
      \end{array}
    \right)
      \begin{array}{ccc}
        {}  \\
        U'_x  \\
        \longrightarrow \\
        x\in A_{yes}\\
         {}  \\
      \end{array}
       \left(
      \begin{array}{ccc}
        \alpha  \\
        \beta  \\
         0 \\
        \vdots\\
        0 \\
      \end{array}
    \right)
    \mbox{\ and\ } \left(
      \begin{array}{ccc}
        \alpha  \\
        \beta  \\
         0 \\
        \vdots\\
        0 \\
      \end{array}
    \right)
      \begin{array}{ccc}
        {}  \\
        U'_x  \\
        \longrightarrow \\
        x\in A_{no}\\
         {}  \\
      \end{array}
      \left(
      \begin{array}{ccc}
        \alpha  \\
        p\beta  \\
         * \\
        \vdots\\
        * \\
      \end{array}
    \right),
\end{equation}
where
\begin{equation}
U'_x=\left(
      \begin{array}{ccc}
        1  \ \ \ \  &  \mathbf{0} \\
        \mathbf{0}  \ \ \ \  &  U_x
      \end{array}
    \right),
\end{equation}
 $\alpha$, $\beta$ are real numbers (for the moment arbitrary), and $*$'s are some values that we do not need to specify  exactly.

In order to have proper quantum states $(\alpha,\beta,0,\ldots,0)^T\bot(\alpha,p\beta,*,\ldots,*)^T$, the following relations have to hold:
\begin{equation}\label{E-eqs}
\left\{
      \begin{array}{ccc}
        \alpha^2+\beta^2=1 \\
         \alpha^2+p\beta^2 =0
      \end{array}
   \right..
\end{equation}
Since $p\leq 0$,  the above equations have a solution  as
\begin{equation}\label{e-p}
\left\{
      \begin{array}{ccc}
        \alpha=\sqrt{\frac{-p}{1-p}}\\
        \beta=\sqrt{\frac{1}{1-p}}.
      \end{array}
   \right.
\end{equation}
 Therefore, we have the following lemma.

\begin{lemma}\label{Lm-1}
If quantum machine ${\cal M}$ $p$-solves a promise problem $A$ where $p\in [-1,0]$, then there exists a quantum machine ${\cal M}'$ that solves the promise problem $A$ exactly.
\end{lemma}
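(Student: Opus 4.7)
The plan is to verify that the construction outlined in equations (\ref{e-7})--(\ref{e-p}) preceding the statement does indeed produce a machine $\mathcal{M}'$ that solves $A$ exactly. First I would enlarge the Hilbert space of $\mathcal{M}$ by adjoining one new orthonormal basis vector, which I will denote $|a\rangle$, and define the new action by $U'_x = I_{|a\rangle} \oplus U_x$, i.e.\ $U'_x$ fixes $|a\rangle$ and acts as $U_x$ on the original subspace; unitarity of $U'_x$ follows immediately from unitarity of $U_x$. The new initial state is chosen to be $|\phi_0\rangle = \alpha|a\rangle + \beta|0\rangle$ with $\alpha,\beta$ given by (\ref{e-p}). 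Because $p \in [-1,0]$, both quantities are real and nonnegative, and the relation $\alpha^2 + \beta^2 = 1$ from (\ref{E-eqs}) guarantees that $|\phi_0\rangle$ is a unit vector.

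Next I would compute $U'_x|\phi_0\rangle$ in the two cases. If $x \in A_{yes}$, then $U'_x|\phi_0\rangle = \alpha|a\rangle + \beta U_x|0\rangle = \alpha|a\rangle + \beta|0\rangle = |\phi_0\rangle$, so the yes-output coincides with the initial state. If $x \in A_{no}$, then $U'_x|\phi_0\rangle = \alpha|a\rangle + p\beta|0\rangle + \beta\sqrt{1-p^2}\,|\psi_x\rangle$, and (using $\langle a|\psi_x\rangle = \langle 0|\psi_x\rangle = 0$) its inner product with $|\phi_0\rangle$ equals $\alpha^2 + p\beta^2$, which vanishes by the second equation of (\ref{E-eqs}). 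Hence the yes-output and the no-output are orthogonal, exactly the condition under which a single projective measurement distinguishes them with certainty.

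Finally, to package this as an MOQFA in the form of the paper's definition, where the accepting set $Q_a$ must consist of standard basis vectors, I would fold into $\mathcal{M}'$ a right end-marker unitary $U'_{\$}$ that rotates $|\phi_0\rangle$ to $|a\rangle$; then the yes-output becomes exactly $|a\rangle$ while the no-output remains orthogonal to $|a\rangle$, so setting $Q_a = \{|a\rangle\}$ yields acceptance probability $1$ on every $x \in A_{yes}$ and $0$ on every $x \in A_{no}$. I do not anticipate any genuine obstacle: the only nontrivial ingredient is the orthogonality verification in the second step, and that orthogonality is precisely what the system (\ref{E-eqs}) was set up to guarantee. The proof is essentially the formalisation of the picture already drawn in (\ref{e-7})--(\ref{e-p}); one could also remark that the construction adds exactly one basis state to $\mathcal{M}$, which is why it will later give constant-size MOQFAs.
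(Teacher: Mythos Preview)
Your proposal is correct and follows essentially the same construction as the paper: add one extra basis vector, take the block-diagonal extension $U'_x = 1 \oplus U_x$, use the initial state $\alpha|a\rangle+\beta|0\rangle$ with $\alpha,\beta$ from~(\ref{e-p}), and then observe that the yes- and no-outputs are orthogonal because $\alpha^2+p\beta^2=0$. The paper's proof of Lemma~\ref{Lm-1} stops at the projective measurement $\{|0'\rangle\langle 0'|,\,I-|0'\rangle\langle 0'|\}$ onto the (non-basis) initial vector, whereas your extra end-marker rotation $U'_{\$}$ that sends $|\phi_0\rangle$ to a basis state is exactly the device the paper introduces only in the subsequent Lemma~\ref{Th-1}; so your write-up is slightly more explicit about the MOQFA packaging but otherwise identical in substance.
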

\begin{proof}
Since the quantum machine ${\cal M}$ $p$-solves the promise problem $A$,  we have  $U_x|0\rangle=|0\rangle$ for any $x\in A_{yes}$ and
 $U_x|0\rangle=p|0\rangle+\sqrt{1-p^2}|\psi_x\rangle$ for $ x\in  A_{no}$, where $|\psi_x\rangle\bot|0\rangle$.
With respect to the analysis above,  we  can have   a new quantum machine ${\cal M}'$   such that $U'_x|0'\rangle=|0'\rangle$ for any $x\in A_{yes}$ and
 $U'_x|0'\rangle=(\alpha,p\beta,*,\ldots,*)^T$ for $ x\in  A_{no}$,  where $|0'\rangle$ is the initial quantum state of the new machine,   $U'_x$ is the action of  the new machine corresponding to the input $x$, and $|0'\rangle\bot (\alpha,p\beta,*,\ldots,*)^T$.
 By choosing  the measurement operator $\{|0'\rangle\langle0'|,\  I-|0'\rangle\langle0'|\}$,  the quantum machine ${\cal M}'$ can solve the promise problem $A$ exactly.

\end{proof}

Now we will deal with the promise problems $A^{N,\,l}$ for the case   $\frac{N}{4}\leq l\leq \frac{3N}{4}$.
\begin{lemma}\label{Th-1}
For any fixed positive integers $N$ and $l$ such that $\frac{N}{4}\leq l\leq \frac{3N}{4}$, the promise problem $A^{N,\,l}$ can be solved exactly by a 3 quantum basis states MOQFA ${\cal M}_{N,\,l}$.
\end{lemma}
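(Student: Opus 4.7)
The plan is to first construct a two-basis-state MOQFA that $p$-solves $A^{N,\,l}$ for some $p\in[-1,0]$ and then invoke Lemma~\ref{Lm-1} to promote it to an exact MOQFA of size $3$.

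First I would take $Q=\{|0\rangle,|1\rangle\}$ with initial state $|0\rangle$ and set the end-marker operators $U_{|\hspace{-1mm}c}$ and $U_{\$}$ to be the identity. For the single letter $a$, choose $U_{a}$ to be the planar rotation by angle $\theta=2\pi/N$, namely
\begin{equation}
U_{a}=\begin{pmatrix} \cos\theta & -\sin\theta \\ \sin\theta & \phantom{-}\cos\theta \end{pmatrix}.
\end{equation}
Then for an input $a^{n}$ the machine ends, before measurement, in the state $U_{a}^{n}|0\rangle=\cos(n\theta)|0\rangle+\sin(n\theta)|1\rangle$. For $x=a^{iN}\in A_{yes}^{N,\,l}$ we get $n\theta=2\pi i$, so the final state is exactly $|0\rangle$. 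For $x=a^{iN+l}\in A_{no}^{N,\,l}$ the final state is $\cos(l\theta)|0\rangle+\sin(l\theta)|1\rangle$, i.e. of the form $p|0\rangle+\sqrt{1-p^{2}}|\psi_{x}\rangle$ with $p=\cos(2\pi l/N)$ and $|\psi_{x}\rangle=\pm|1\rangle\perp|0\rangle$.

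The next step is to check that the hypothesis of Lemma~\ref{Lm-1} is met, i.e.\ that $p\leq 0$. Since $0<l<N$, we have $0<l\theta<2\pi$, and the assumption $\tfrac{N}{4}\leq l\leq \tfrac{3N}{4}$ translates into $\tfrac{\pi}{2}\leq l\theta\leq \tfrac{3\pi}{2}$, on which interval cosine is non-positive; hence $p=\cos(l\theta)\in[-1,0]$. Thus the two-state MOQFA above $p$-solves $A^{N,\,l}$ in the sense of the definition preceding Lemma~\ref{Lm-1}.

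Finally, applying Lemma~\ref{Lm-1} produces an MOQFA $\mathcal{M}_{N,\,l}$ that solves $A^{N,\,l}$ exactly. The construction of Lemma~\ref{Lm-1} adds exactly one extra basis state (the new first coordinate), replaces $U_{a}$ by $U'_{a}=\mathrm{diag}(1,U_{a})$, changes the initial state to $\alpha|0'\rangle+\beta|1'\rangle$ with $\alpha=\sqrt{-p/(1-p)}$, $\beta=\sqrt{1/(1-p)}$ taken from~(\ref{e-p}), and uses $\{|0'\rangle\langle 0'|,I-|0'\rangle\langle 0'|\}$ as the final measurement; this yields a $3$-basis-state MOQFA as required. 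The only non-routine point is verifying the sign condition $p\leq 0$, which is precisely what the hypothesis $\tfrac{N}{4}\leq l\leq \tfrac{3N}{4}$ was chosen to guarantee; outside this range the same construction would give $p>0$ and Lemma~\ref{Lm-1} would not apply, which is presumably why this range is singled out here and the remaining values of $l$ will be handled separately.
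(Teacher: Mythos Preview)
Your proof is correct and follows essentially the same route as the paper: a two-state rotation by $\theta=2\pi/N$ $p$-solves $A^{N,l}$ with $p=\cos(2\pi l/N)\in[-1,0]$ under the hypothesis, and Lemma~\ref{Lm-1} then upgrades it to an exact three-state machine. The paper additionally spells out the three-state MOQFA explicitly, realizing the non-standard initial state and accepting projector of Lemma~\ref{Lm-1} via end-marker unitaries $U_{|\hspace{-1mm}c}$ and $U_{\$}=U_{|\hspace{-1mm}c}^{-1}$ so that the machine conforms to the MOQFA definition (initial state $|0\rangle$, standard-basis measurement); in your last paragraph the initial state and the accepting projector should both be the \emph{same} vector $(\alpha,\beta,0)^{T}$, not one a superposition and the other a basis vector.
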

\begin{proof}
If we choose $\theta=\frac{2\pi}{N}$ and design an MOQFA ${\cal A}$ with two quantum states\footnote{The two quantum states are $|0\rangle=(1,0)^T$ and $|1\rangle=(0,1)^T $.} as the one in Theorem 1 in \cite{AmYa11}, then it is easy to see that  ${\cal A}$ $p$-solves promise problem $A^{N,\,l}$, where $p=\cos l\theta$. Since $\frac{N}{4}\leq l\leq \frac{3N}{4}$, we have $p=\cos l\theta\leq 0$. We can now design an MOQFA  ${\cal M}_{N,\,l}$ to solve the promise problem $A^{N,\,l}$ exactly according to the method specified before  Lemma \ref{Lm-1}.

According to Equalities (\ref{e-p}),
we have $\alpha=\sqrt{\frac{-p}{1-p}}=\sqrt{\frac{-\cos l\theta}{1-\cos l\theta}}$ and $\beta=\sqrt{\frac{1}{1-p}}=\sqrt{\frac{1}{1-\cos l\theta}}$.
Now let ${\cal M}_{N,\,l}=(Q,\Sigma,\{U_{\sigma}\,|\, \sigma\in\Sigma' \},|{0}\rangle,Q_a)$, where $Q=\{|0\rangle,|1\rangle,|2\rangle\}$, $Q_a=\{|0\rangle\}$,
\begin{equation}
    U_{|\hspace{-1mm}c}=\left(
      \begin{array}{ccc}
        \alpha \ & -\beta \ & 0  \\
        \beta \ & \alpha \ & 0   \\
        0 \ &  0  \  &1\\
      \end{array}
    \right),\
    U_{a}=\left(
      \begin{array}{ccc}
        1 & 0 & 0  \\
        0 & \cos \theta & -\sin \theta   \\
        0 &  \sin \theta  &\cos \theta\\
      \end{array}
    \right) \mbox{\ and\ }
    U_{\$}=U_{|\hspace{-1mm}c}^{-1}=\left(
      \begin{array}{ccc}
        \alpha \ & \beta \ & 0  \\
        -\beta \ & \alpha \ & 0   \\
        0 \ &  0  \  &1\\
      \end{array}
    \right).
\end{equation}

$U_{|\hspace{-1mm}c}$ is the unitary  matrix that  changes the initial quantum state $(1,0,0)^T$ to $(\alpha,\beta,0)^T$ according the processes shown in  Equality (\ref{e-7}) and (\ref{e-8}).

If the input  $x\in A^{N,\,l}_{yes}$, then the quantum state before the measurement is
\begin{equation}
    |q\rangle=U_{\$}(U_a)^{iN}U_{|\hspace{-1mm}c}|0\rangle=U_{\$}IU_{|\hspace{-1mm}c}|0\rangle=|0\rangle,
\end{equation}
because $(U_a)^{iN}=I$.

If the input  $x\in A^{N,\,l}_{no}$, then the quantum state before the measurement is
\begin{align}
 |q\rangle&=U_{\$}(U_a)^{iN+l}U_{|\hspace{-1mm}c}|0\rangle=U_{\$}(U_a)^{l}U_{|\hspace{-1mm}c}|0\rangle\\ \label{e1}
 &=U_{\$}(U_a)^{l}\left( \alpha|0\rangle+ \beta|1\rangle\right)\\
 &=U_{\$}\left( \alpha|0\rangle+ \cos l\theta\cdot\beta|1\rangle+\gamma|2\rangle\right)\\
 &=U_{\$}\left( \alpha|0\rangle+ p\beta|1\rangle+\gamma|2\rangle\right)\\
 &=(\alpha^2+p\beta^2)|0\rangle+ \gamma_1|1\rangle+\gamma_2|2\rangle\\
 &=\gamma_1|1\rangle+\gamma_2|2\rangle,  \label{e2}
 \end{align}
 where $\gamma$, $\gamma_1$ and $\gamma_2$ are  amplitudes that we do not need to specify  exactly.

Since the amplitude of  $|0\rangle$ in the above quantum state $|q\rangle $ is 0, we  get the exact result after the measurement of $\gamma_1|1\rangle+\gamma_2|2\rangle$ in the standard basis $\{|0\rangle,|1\rangle,|2\rangle\}$.
\end{proof}

\begin{remark}
Similar matrices as $U_a$ have been first used in \cite{Amb98} and also in \cite{Amb09,AmYa11,Ber05,Yak10,Zhg12,Zhg13}. Our proof has been inspired by the methods introduced in \cite{Amb13}.   Similar methods can be found also in \cite{GDZ14}.
\end{remark}
\begin{remark}\label{Rm-2}
Let $\theta=\frac{2\pi}{N}$. If $l<\frac{N}{4}$ or $l>\frac{3N}{4}$, then $p=cos l\theta>0$ and Equations (\ref{E-eqs}) have no solution.
\end{remark}

In order to follow the same results
for the cases that   $l<\frac{N}{4}$ or $l>\frac{3N}{4}$,  we can use some  integer $q$ such that $\theta=\frac{q\cdot 2\pi}{N}$ and $ cos l\theta\leq0$. That is to find out some integer $q$ such that
\begin{equation}\label{e-1}
2\pi i+\frac{\pi}{2}\leq \frac{q 2\pi l}{N}\leq 2\pi i+\frac{3\pi}{2},
\end{equation}
where $i$ is an  integer. Notice that  Equality (\ref{e-1}) holds if and only if
\begin{equation}
\frac{N}{l}(i+\frac{1}{4})\leq q\leq \frac{N}{l}(i+\frac{3}{4}).
\end{equation}

If $l<\frac{N}{4}$, then $\frac{N}{l}> 4$ and
\begin{equation}
\frac{N}{l}(i+\frac{3}{4})-\frac{N}{l}(i+\frac{1}{4})=\frac{N}{l}\times\frac{1}{2}\geq 4\times \frac{1}{2}=2.
\end{equation}
Therefore, it is easy to find out integers $i$ and $q$, say $i=0$ and $q=\lceil \frac{N}{4l}\rceil$, such that  Equality (\ref{e-1}) holds.

For the case that $l>\frac{3N}{4}$, we need the following lemma.
\begin{lemma}
There exist  integers $i$ and $q$ such that
$\frac{N}{l}(i+\frac{1}{4})\leq q\leq \frac{N}{l}(i+\frac{3}{4})$
 for  $N>l>\frac{3N}{4}$.
\end{lemma}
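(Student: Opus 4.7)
The plan is to convert the inequality into a fractional-part condition and then solve it by a gcd reduction. Dividing $\frac{N}{l}(i+\frac14)\le q\le\frac{N}{l}(i+\frac34)$ by $N/l$ shows that a valid pair $(i,q)$ exists if and only if there is an integer $q$ with $\{ql/N\}\in[1/4,3/4]$, where $i$ is then forced to equal $\lfloor ql/N\rfloor$. So it suffices to produce a single integer $q$ of this sort.

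To control the fractional parts I would let $d=\gcd(N,l)$, write $N=dN'$, $l=dl'$ with $\gcd(N',l')=1$, and observe that $\{ql/N\}=\{ql'/N'\}$. Since $l'$ is invertible modulo $N'$, as $q$ ranges over $\mathbb{Z}$ the residue $ql'\bmod N'$ runs through every element of $\{0,1,\ldots,N'-1\}$; hence the attainable fractional parts are exactly $0,\tfrac{1}{N'},\ldots,\tfrac{N'-1}{N'}$. The task thus reduces to finding an integer $j\in[N'/4,3N'/4]$ and then solving $ql'\equiv j\pmod{N'}$ to recover $q$ (and $i=\lfloor ql/N\rfloor$).

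The hypothesis $l>3N/4$ yields $l'>3N'/4$, and since $l'<N'$ is coprime to $N'$, a short case check rules out $N'\le 4$ and so forces $N'\ge 5$. For such $N'$ the interval $[N'/4,3N'/4]$ has length $N'/2\ge 5/2>1$ and therefore contains an integer $j$, closing the argument. The only delicate step is this lower bound on $N'$: were one to allow $N'=1$, the interval $[1/4,3/4]$ would trap no integer and the argument would fail, but $N'=1$ is excluded by $l<N$, and the stronger assumption $l>3N/4$ in fact pushes the bound up to $N'\ge 5$, leaving ample slack for the length estimate.
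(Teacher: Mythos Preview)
Your proof is correct and takes a genuinely different route from the paper's. The paper argues directly on the fractional parts of $j\cdot\frac{N-l}{l}$: since $0<\frac{N-l}{l}<\frac13$ and the target window $(\tfrac14,\tfrac23)$ has length exceeding $\tfrac13$, successive increments cannot skip it, so some $j$ lands inside; from that $j$ the paper extracts $q=\bigl\lfloor\frac{N}{l}(j+\tfrac14)\bigr\rfloor+1$ explicitly. You instead recast the condition as $\{ql/N\}\in[\tfrac14,\tfrac34]$, reduce by $d=\gcd(N,l)$ to coprime $N',l'$, observe that $\{ql'/N'\}$ hits every multiple of $1/N'$, and then use the hypothesis $l>3N/4$ to force $N'\ge 5$, so the interval $[N'/4,3N'/4]$ is long enough to contain an integer $j$; the desired $q$ is recovered from the congruence $ql'\equiv j\pmod{N'}$.

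Your argument is cleaner and more structural: the gcd reduction isolates exactly why the hypothesis $l>3N/4$ matters (it bounds $N'$ from below), and the final length estimate is transparent. The paper's argument has the mild advantage of yielding $q$ by a direct formula rather than via a modular inverse, which is what they actually use downstream in Theorem~\ref{Th-3} and Theorem~\ref{Th-10} when specifying $\theta$; but your $q$ is equally constructive, so this is a matter of taste rather than substance.
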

\begin{proof}
Since $N>l>\frac{3N}{4}$,  we have $0<\frac{N-l}{l}<\frac{1}{3}$ and
\begin{equation}
\frac{N}{l}\left(i+\frac{1}{4}\right)=\left(1+\frac{N-l}{l}\right)\left(i+\frac{1}{4}\right)=i+\frac{1}{4}+i\frac{N-l}{l}+\frac{N-l}{4l}.
\end{equation}
Obviously, we have
\begin{equation}\label{E-1}
\frac{1}{4}<\frac{1}{4}+\frac{N-l}{4l}<\frac{1}{4}+\frac{1}{12}=\frac{1}{3}.
\end{equation}
Since $\frac{2}{3}-\frac{1}{4}>\frac{1}{3}$ and $\frac{N-l}{l}<\frac{1}{3}$, there must exist an integer, say $j$, such that
\begin{equation}\label{E-2}
\frac{1}{4}<j\frac{N-l}{l}-\left\lfloor j\frac{N-l}{l} \right\rfloor <\frac{2}{3}.
\end{equation}
Let $a=\left\lfloor \frac{N}{l}\left(j+\frac{1}{4}\right)\right\rfloor$. According to Inequalities (\ref{E-1}) and (\ref{E-2}), we have
\begin{equation}
\frac{1}{2}<\frac{N}{l}\left(j+\frac{1}{4}\right)-a= \frac{1}{4}+\frac{N-l}{4l}+j\frac{N-l}{l}-\left\lfloor j\frac{N-l}{l} \right\rfloor <1
\end{equation}
and
\begin{equation}
\frac{N}{l}\left(j+\frac{3}{4}\right)= \frac{N}{l}\left(j+\frac{1}{4}\right)+\frac{N}{l}\times\frac{1}{2}>a+\frac{1}{2}+\frac{1}{2}=a+1.
\end{equation}
Thus,  $q=a+1$ is the integer such that $\frac{N}{l}(j+\frac{1}{4})\leq q\leq \frac{N}{l}(j+\frac{3}{4})$
 for  $N>l>\frac{3N}{4}$.
\end{proof}

 Now we  can prove the following general result:
\begin{theorem}\label{Th-3}
For any fixed positive integers $N$ and $l$ such that $0<l<N$, the promise problem $A^{N,\,l}$ can be solved exactly by a 3 quantum basis states MOQFA ${\cal M}_{N,\,l}$.
\end{theorem}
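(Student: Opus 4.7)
The plan is to extend the construction of Lemma \ref{Th-1} by replacing the rotation angle $\theta = 2\pi/N$ with the more flexible choice $\theta = 2\pi q / N$ for a suitable positive integer $q$. The key observation is that in the proof of Lemma \ref{Th-1} the value of $p = \cos(l\theta)$ only really matters through the requirement $p\leq 0$, which is what Equations~(\ref{E-eqs}) need in order to admit the real solution~(\ref{e-p}); any angle of the form $2\pi q / N$ still satisfies $(U_a)^N = I$, which is all that is used for yes-instances. So once an appropriate $q$ is found, the entire construction from Lemma \ref{Th-1} transfers verbatim.

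First I would split into three cases according to the value of $l$. If $\tfrac{N}{4}\leq l\leq \tfrac{3N}{4}$, Lemma \ref{Th-1} itself applies (taking $q=1$). If $l<\tfrac{N}{4}$, the discussion preceding this theorem shows that $q=\lceil N/(4l)\rceil$ together with $i=0$ already makes Inequality~(\ref{e-1}) hold. If $l>\tfrac{3N}{4}$, the preceding lemma delivers integers $j$ and $q$ such that $\tfrac{N}{l}(j+\tfrac14)\leq q\leq \tfrac{N}{l}(j+\tfrac34)$, which is equivalent to $\cos(l\cdot 2\pi q /N)\leq 0$. Therefore in all three regimes we obtain a positive integer $q$ with $p := \cos(l\theta)\leq 0$, where $\theta = 2\pi q /N$.

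Next I would construct ${\cal M}_{N,l}$ with $Q=\{|0\rangle,|1\rangle,|2\rangle\}$, $Q_a=\{|0\rangle\}$, and the same block form of $U_{|\hspace{-1mm}c}$, $U_a$, $U_\$$ displayed in the proof of Lemma~\ref{Th-1}, but with the new $\theta$ and with $\alpha=\sqrt{-p/(1-p)}$, $\beta=\sqrt{1/(1-p)}$ as dictated by~(\ref{e-p}). The verification then amounts to rerunning the computation in the proof of Lemma~\ref{Th-1}: for $x=a^{iN}\in A_{yes}^{N,l}$ one has $(U_a)^{iN}=I$ since $N\theta = 2\pi q$, so the pre-measurement state is $U_\$ U_{|\hspace{-1mm}c}|0\rangle = |0\rangle$ and the machine accepts with probability $1$; for $x=a^{iN+l}\in A_{no}^{N,l}$, the same chain of equalities leading from~(\ref{e1}) to~(\ref{e2}) gives amplitude $\alpha^2+p\beta^2=0$ on $|0\rangle$, so the machine rejects with probability $1$.

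The main obstacle, and essentially the only new content beyond Lemma~\ref{Th-1}, is producing the integer $q$ in the regime $l>\tfrac{3N}{4}$; but this is exactly what the preceding lemma does, using a pigeonhole-style argument on the fractional parts of multiples of $\tfrac{N-l}{l}$. Once that $q$ is in hand, Theorem~\ref{Th-3} follows by assembling the three cases above.
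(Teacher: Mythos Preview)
Your proposal is correct and follows essentially the same route as the paper: split into the three ranges of $l$, choose $\theta=2\pi q/N$ with $q=1$, $q=\lceil N/(4l)\rceil$, or the $q$ supplied by the preceding lemma respectively, and then rerun the construction and verification of Lemma~\ref{Th-1} verbatim. The paper's proof is nothing more than this case split plus a pointer back to Lemma~\ref{Th-1}, so your write-up actually spells out slightly more detail than the original.
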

\begin{proof}
We will proceed similarly as in the proof of Lemma \ref{Th-1}.
The choice of  $\theta$ depends on $N$ and $l$.  For given $N$ and $l$, if $\frac{N}{4}\leq l\leq \frac{3N}{4}$,  we  choose $\theta=\frac{2\pi}{N}$; if $l<\frac{N}{4}$, then we  choose $\theta=\frac{2\pi}{N}\lceil \frac{N}{4l}\rceil$; if $l>\frac{3N}{4}$,  we should first find out an integer $j$ such that $\frac{1}{4}<j\frac{N-l}{l}-\left\lfloor j\frac{N-l}{l} \right\rfloor <\frac{2}{3}$, then we  choose $q=\left\lfloor \frac{N}{l}\left(j+\frac{1}{4}\right)\right\rfloor+1$ and $\theta=\frac{q2\pi}{N}$. The rest of the proof is now similar to the one  of  Lemma \ref{Th-1}.
\end{proof}

\begin{corollary}\label{Th-7}
The promise problem $A^{N,r_1,r_2}$ can be solved exactly by a 3 quantum basis states MOQFA ${\cal M}_{N,r_1,r_2}$.
\end{corollary}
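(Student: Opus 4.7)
The plan is to reduce the corollary to Theorem \ref{Th-3} by absorbing a constant shift into the left end-marker. Set $l := (r_2 - r_1) \bmod N$; since $r_1 \not\equiv r_2 \pmod{N}$, we have $0 < l < N$, so Theorem \ref{Th-3} supplies a three quantum basis states MOQFA ${\cal M}_{N,l}=(Q,\Sigma,\{U_\sigma\mid \sigma\in\Sigma'\},|0\rangle,Q_a)$ solving $A^{N,l}$ exactly. Recall in particular that $(U_a)^N = I$, that $U_{\$} U_{\cent} |0\rangle = |0\rangle$, and that $U_{\$}(U_a)^l U_{\cent}|0\rangle$ has zero amplitude on $|0\rangle$.

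I would then define ${\cal M}_{N,r_1,r_2}$ to have the same $Q$, $Q_a$, initial state $|0\rangle$, the same middle operator $U_a$, and the same right end-marker $U_{\$}$, but replace the left end-marker by
\[
U'_{\cent} := (U_a)^{-r_1}\, U_{\cent},
\]
which is unitary as a product of unitaries, and well defined since $(U_a)^{-r_1} = (U_a)^{N-r_1}$. For an input $a^n$, the state just before measurement is $U_{\$}(U_a)^n U'_{\cent}|0\rangle = U_{\$}(U_a)^{n-r_1} U_{\cent}|0\rangle$. If $n \equiv r_1 \pmod{N}$ then $(U_a)^{n-r_1}=I$ and the state equals $U_{\$} U_{\cent}|0\rangle = |0\rangle$, accepted with certainty; if $n \equiv r_2 \pmod{N}$ then $(U_a)^{n-r_1}=(U_a)^l$ and the state equals $U_{\$}(U_a)^l U_{\cent}|0\rangle$, whose amplitude on $|0\rangle$ is zero by Theorem \ref{Th-3}, hence rejected with certainty.

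No real obstacle arises. The only checks are that $l\in(0,N)$ so that Theorem \ref{Th-3} applies, which is immediate from $r_1\not\equiv r_2 \pmod N$, and that the state count remains three, which is clear since we only replace one unitary on the existing Hilbert space (the rotation angle $\theta$ is chosen exactly as in Theorem \ref{Th-3} for this value of $l$). Conceptually, writing $n=r_1+(n-r_1)$ converts an $A^{N,r_1,r_2}$ instance into an $A^{N,l}$ instance, and the precomputed shift $(U_a)^{-r_1}$ is absorbed into the left end-marker at no cost in state complexity.
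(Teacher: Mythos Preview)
Your proof is correct and follows essentially the same approach as the paper: both set $l=(r_2-r_1)\bmod N$, reuse $U_a$ and $U_{\$}$ from the three-state automaton ${\cal M}_{N,l}$ of Theorem~\ref{Th-3}, and absorb a constant shift into the left end-marker via $U'_{\cent}=(U_a)^{N-r_1}U_{\cent}$ (equivalently $(U_a)^{-r_1}U_{\cent}$ since $(U_a)^N=I$). Your write-up is in fact more explicit in verifying the two cases than the paper's, which simply refers back to the earlier computation.
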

\begin{proof}
  Let $l=(r_2-r_1) \ {\it mod}\ N$. We will choose $\theta$ according to $N$ and $l$ in the same way as we did in the proof of Theorem \ref{Th-3}. Let
  ${\cal M}_{N,r_1,r_2}=(Q,\Sigma,\{U'_{\sigma}\,|\, \sigma\in\Sigma' \},|{0}\rangle,Q_a)$, where $Q=\{|0\rangle,|1\rangle,|2\rangle\}$, $Q_a=\{|0\rangle\}$,
$ U'_{|\hspace{-1mm}c}=(U_a)^{N-r_1{\it mod}\ N}U_{|\hspace{-1mm}c},\ U'_a=U_a$ and $U'_{\$}=U_{\$}$ ($U_{|\hspace{-1mm}c}$,  $U_a$, and $U_{\$}$ are the ones defined in Theorem \ref{Th-1}). The remaining part of the proof is an analogue of the one of Theorem \ref{Th-1}.

\end{proof}

We now deal with the minimal DFA for the promise problem $A^{N,\,l}$.
\begin{theorem}\label{Th-6}
For any fixed positive integers $N$ and $l$,  the minimal DFA solving the promise problem $A^{N,\,l}$ has $d$ states, where $d$ is the  smallest integer such that $d\mid N$ and $d\nmid l$.
\end{theorem}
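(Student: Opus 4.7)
My plan is to match an explicit upper-bound construction against a structural lower bound that exploits the tail-plus-cycle form of any unary DFA.

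For the upper bound I would exhibit the obvious mod-$d$ automaton: state set $\{0,1,\ldots,d-1\}$, initial state $0$, transition $\delta(i,a)=(i+1)\bmod d$, and accepting set $\{0\}$. After reading $a^n$ the DFA sits in state $n\bmod d$. Since $d\mid N$, every yes-instance $a^{iN}$ lands in state $0$, and since $d\nmid l$, every no-instance $a^{iN+l}$ lands in state $l\bmod d\neq 0$. Hence $d$ states suffice.

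For the lower bound, let $\mathcal{A}$ be any $m$-state DFA that solves $A^{N,l}$. On a one-letter alphabet, the reachable part of $\mathcal{A}$ consists of a tail of length $t\geq 0$ followed by a simple cycle of length $c\geq 1$ with $t+c\leq m$, and for every $n\geq t$ the state reached after $a^n$ depends only on $(n-t)\bmod c$. Set $g=\gcd(N,c)$ and $H=\{0,g,2g,\ldots,c-g\}\subseteq \mathbb{Z}/c\mathbb{Z}$. For all sufficiently large $i$, the residues $iN\bmod c$ sweep out the entire subgroup $H$, while the residues $(iN+l)\bmod c$ sweep out the coset $l+H$. Each large yes-instance must reach an accepting cycle state and each large no-instance a rejecting one, so $H$ and $l+H$ must be disjoint in $\mathbb{Z}/c\mathbb{Z}$. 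But $H=l+H$ precisely when $g\mid l$, so we conclude $g\nmid l$.

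By the minimality of $d$, the relations $g\mid N$ and $g\nmid l$ force $d\leq g$, and since $g\mid c$ we obtain $d\leq g\leq c\leq m$. Combined with the construction above this gives the claimed bound of exactly $d$ states. The main subtlety I anticipate is ensuring that the two residue sets are genuinely realized by arbitrarily large $i$, so that the argument constrains the cycle behaviour and not merely the tail; this follows because the residue $iN\bmod c$ is periodic in $i$ with period $c/g$, so every element of $H$ (and hence every element of $l+H$) is attained infinitely often.
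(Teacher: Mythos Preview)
Your argument is correct, and it is cleaner than the route the paper takes. The paper first replaces the quantity in the statement by an auxiliary one, namely the smallest integer $d'$ such that $d'\nmid(pN+l)$ for every integer $p$; it builds the $d'$-state cyclic DFA with accepting set $\{s_{iN\bmod d'}:i\ge 0\}$, derives the lower bound by observing that a shorter cycle length $t<d'$ would admit some $p$ with $t\mid pN+l$, and only afterwards spends a separate gcd argument to show that this $d'$ coincides with the $d$ of the theorem. You bypass that detour entirely: by working directly with the divisor characterization you get the one-state accepting set $\{0\}$ for free in the upper bound, and your lower bound extracts $g=\gcd(N,c)$ from the cycle and uses the coset dichotomy $H$ versus $l+H$ to obtain $g\mid N$, $g\nmid l$, hence $d\le g\le c\le m$ in a single stroke. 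The only cosmetic point worth tightening is that the cycle state reached by $a^n$ is indexed by $(n-t)\bmod c$ rather than $n\bmod c$; since the shift by $-t$ is a bijection on $\mathbb{Z}/c\mathbb{Z}$, disjointness of $-t+H$ and $-t+l+H$ is equivalent to disjointness of $H$ and $l+H$, so your conclusion stands unchanged.
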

\begin{proof}
Let $d$ be the  smallest integer such that $d\nmid (pN+l)$  for any integer $p$.
We consider now a $d$-state DFA ${\cal A}=(S,\Sigma,\delta,s_0,F)$, with the set of states   $S=\{s_0,s_1,\ldots,s_{d-1}\}$,  the set of accepting states  $F=\{s_{iN \ {\it mod}\  d}\,|\,i\geq 0\}$, and the transition function $\delta(s_i,a)=s_{(i+1)\ {\it mod}\  d}$.

If $x\in A_{yes}$, then $x=a^{iN}$ for some $i$ and $\widehat{\delta}(s_0,a^{iN})=s_{iN \ {\it mod}\  d}\in F$.

If $x\in A_{no}$, then $x=a^{jN+l}$ for some $j$.
We prove that for any $j\geq 0$, $\widehat{\delta}(s_0,a^{jN+l})=  s_{(jN+l)\ {\it mod}\  d}\not\in F$ by contradiction as follows:
We assume that $s_{(jN+l)\ {\it mod}\  d}\in F$.  Since $F=\{s_{iN \ {\it mod}\  d}\,|\,i\geq 0\}$,  then there  exists $i$ such that $s_{(jN+l)\ {\it mod}\  d}=s_{iN \ {\it mod}\  d}$,
 i.e. $(jN+l\equiv  iN \ {\it mod}\  d)$. Therefore, $(j-i)N+l\equiv 0 \ {\it mod}\  d$. Let $p=j-i$, then $d$ can  divide $pN+l$,   which is a contradiction.

Therefore the promise problem $A^{N,\,l}$ can be solved by a d-state DFA ${\cal A}$.

 \begin{figure}[htbp]
 \centering\epsfig{figure=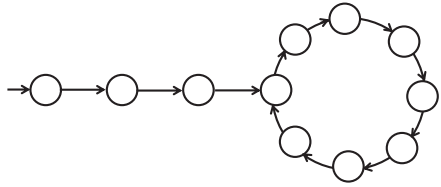,width=60mm}\\
 \centering\caption{The shape of the transition diagram of DFA solving  an  infinite unary promise problem.}\label{f1}
\end{figure}

Now we prove by a contradiction  that any DFA solving the promise problem $A^{N,\,l}$ has at least $d$ states.
Assume that there is an $m$-state DFA ${\cal A}'$ solving the promise problem $A^{N,\,l}$ and $m<d$.
Since both $A^{N,\,l}_{yes}$ and $A^{N,\,l}_{no}$ contain infinitely many unary strings, the shape of the transitions in the DFA ${\cal A}'$ must be like that in Figure \ref{f1}.

 Suppose now that  there are $k$ $(<d)$ states   before ${\cal A}'$ enters the cycle and the cycle has $t$ ($<d$)  states, says $s_0,\ldots,s_{t-1}$, such that $\delta(s_i,a)=s_{(i+1)\ {\it mod}\  t}$.
Since $t<d$, according to the assumption, these exits a $p$ such that $t\mid (pN+l)$. Therefore, we have  $((p+1)N+l-k\equiv N-k \ {\it mod}\  t)$, which means that
 $\widehat{\delta}(s_0, a^{(p+1)N+l})=\widehat{\delta}(s_0,a^N)$. This means that
 both $a^{(p+1)N+l}$ and $a^{N}$ are in $A^{N,\,l}_{yes}$ or in $A^{N,\,l}_{no}$  - a contradiction.

Therefore, the minimal DFA solving the promise problem $A^{N,\,l}$ has $d$ states.

Let $c$ be the smallest integer such that $c\mid N$ and $c\nmid l$.  We prove that $d=c$ as follows.

 Since $c\mid N$ and $c\nmid l$, we have $c\nmid (pN+l)$ for any integer $p$.  Since $d$ is the  smallest integer such that $d\nmid (pN+l)$  for any integer $p$,
  we have $c\geq d$.

 We now prove that $d\geq c$ as follows.

 Assume that $d\mid l$. We have $d\mid (dN+l)$, what contradicts to that $d$ is a integer such that $d\nmid (pN+l)$  for any integer $p$. Therefore $d\nmid l$.

  We are now going to prove that $d|N$.
  Let $g=\gcd(N,d)$. If $g=1$, then according to Euclid's theorem, then there must exist integers $u$ and $v$ such that $ud-vN=l$ and therefore $d\mid (vN+l)$, which is a contradiction. Hence $g>1$.

 Now let $d'=d/g$.   If  $g\mid l$, then $(d=d'g) \nmid (pN+l)$ for any $p$. Since $g\mid N$ and $g\mid l$, we have  $d' \nmid (pN+l)$ for any $p$. Therefore $d'<d$, what   contradicts to the assumption that $d$ is the  smallest integer such that $d\nmid (pN+l)$  for any integer $p$. Therefore, $g\nmid l$. Since $g\mid N$, we have  $g\nmid (pN+l)$ for any $p$. Now we have $g\geq d$. Since $g=\gcd(N,d)$, therefore $g=d$ and $d\mid N$.  Since $d\mid N$ and $d\nmid l$, therefore $d\geq c$.

\end{proof}

\begin{corollary}\label{Th-2}
For any fixed prime $N$ and fixed integer l such that $0< l<N$,  the minimal DFA solving the promise problem $A^{N,\,l}$ has $N$ states.
\end{corollary}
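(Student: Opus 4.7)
The plan is to derive this directly from Theorem \ref{Th-6}, which already identifies the minimal DFA size as the smallest integer $d$ satisfying $d\mid N$ and $d\nmid l$. So the only task is to evaluate this quantity under the extra assumption that $N$ is prime and $0<l<N$.

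First I would enumerate the positive divisors of $N$: since $N$ is prime, they are exactly $1$ and $N$. The candidate $d=1$ is ruled out immediately because $1\mid l$ holds for every integer $l$, so $d=1$ fails the condition $d\nmid l$. Hence the minimal valid $d$ must be at least $2$, and since every divisor of $N$ strictly greater than $1$ equals $N$ itself, the only remaining candidate is $d=N$.

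It then remains to verify that $d=N$ actually satisfies both conditions. The relation $N\mid N$ is trivial, and $N\nmid l$ follows from the hypothesis $0<l<N$, since a positive integer strictly less than $N$ cannot be a multiple of $N$. Thus $d=N$ meets both requirements of Theorem \ref{Th-6}, and the minimal DFA solving $A^{N,l}$ has exactly $N$ states.

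There is essentially no obstacle: the whole argument is a one-line consequence of primality together with the range restriction on $l$. The only place where one could slip is forgetting to exclude $d=1$, but the observation that every integer is divisible by $1$ handles this cleanly.
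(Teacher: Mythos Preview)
Your proof is correct and follows exactly the same approach as the paper: both apply Theorem~\ref{Th-6} and observe that, since $N$ is prime, the only divisors of $N$ are $1$ and $N$, with $d=1$ excluded because $1\mid l$ and $d=N$ valid because $0<l<N$ forces $N\nmid l$. The paper states this in a single sentence, while you have spelled out each step, but the underlying argument is identical.
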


\begin{proof}
Since $N$ is a prime,  $d=N$ is the smallest integer  such that $d\mid N$ and $d\nmid l$.\ \ \

\end{proof}

\begin{corollary}
If $N$ is not a prime number and $l$ is a positive integer such that $\gcd(N,\,l)=1$,  then the minimal DFA solving the promise problem $A^{N,\,l}$ has $d$ states, where $d$ is the smallest integer such that $d \mid N$ and $d\neq 1$.
\end{corollary}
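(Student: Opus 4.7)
The plan is to deduce this corollary directly from Theorem~\ref{Th-6}, which already identifies the size of the minimal DFA for $A^{N,\,l}$ as the smallest positive integer $d'$ satisfying $d'\mid N$ and $d'\nmid l$. So the whole task reduces to showing that under the coprimality hypothesis $\gcd(N,l)=1$, this $d'$ coincides with the quantity $d$ defined in the statement, namely the smallest divisor of $N$ strictly greater than $1$ (equivalently, the smallest prime factor of $N$).

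First I would argue $d'\leq d$: by definition $d\mid N$, and since $d>1$ together with $\gcd(N,l)=1$ forces $d\nmid l$ (otherwise $d$ would be a common divisor of $N$ and $l$ exceeding $1$), the integer $d$ itself lies in the set over which $d'$ is minimized. For the reverse inequality $d'\geq d$, note that $d'=1$ is impossible because $1\mid l$ always, so $d'\geq 2$; combined with $d'\mid N$, this makes $d'$ a divisor of $N$ strictly greater than $1$, hence $d'\geq d$ by minimality of $d$. Together these yield $d=d'$, and applying Theorem~\ref{Th-6} finishes the proof.

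There is essentially no obstacle here: the argument is a one-step unwrapping of definitions using the elementary fact that a common divisor of $N$ and $l$ must divide $\gcd(N,l)=1$. The hypothesis that $N$ is composite is not logically needed for the argument to go through (in the prime case it would reduce to Corollary~\ref{Th-2}), but it ensures that the quantity $d$ obtained here is genuinely smaller than $N$, which is what makes the corollary informative rather than redundant.
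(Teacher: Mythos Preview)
Your proof is correct and follows essentially the same approach as the paper: both reduce to Theorem~\ref{Th-6} and use the observation that if the smallest nontrivial divisor $d$ of $N$ also divided $l$, then $\gcd(N,l)\geq d>1$, contradicting coprimality. Your version is simply more explicit in spelling out both inequalities $d'\leq d$ and $d'\geq d$, whereas the paper states only the contradiction for $d\nmid l$ and leaves the minimality direction (that any $d'$ with $d'\mid N$ and $d'\nmid l$ must exceed~$1$, hence be at least $d$) implicit.
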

\begin{proof}
Assume that $d\neq 1 $ is the smallest
  integer such that $d \mid N$. We prove that $d\nmid l$ by a contradiction.
If  $d\mid l$,  then $\gcd(N,l)\geq d>1$, which is a contradiction.
 Therefore $d$ has to be  the  smallest integer such that $d\mid N$ and $d\nmid l$.

\end{proof}

\begin{remark}
If we choose $N=2^{k+1}$ and $l=2^k$, then $A^{N,\,l}$  is the promise problem studied in \cite{AmYa11}. In such a case   $d=2^{k+1}$.
If we choose $N=2^{k+1}(2m+1)$  and $l=2^{k}(2m+1)$, then $A^{N,\,l}$  is the promise problem mentioned in Section 3 in   \cite{AmYa11} and we  have $d=2^{k+1}$.
 If we choose $N=2n$ and $l=n$, where $n$ is odd, then $A^{N,\,l}$ is the promise problem mentioned in Section 4 in   \cite{AmYa11}. We have therefore in this case that  $pN+l=(2p+1)n$ is an odd integer and therefore $d=2$.
\end{remark}

\begin{remark}
Let $l=(r_2-r_1)\ {\it mod}\  N$. Then the size of the minimal  DFA for the promise problem $A^{N,r_1,r_2}$ is the same as the size of  the minimal DFA  for the promise problem $A^{N,\,l}$ and the proof is similar to the one of Theorem \ref{Th-6}.
\end{remark}

\section{Binary promise problems}

We  consider now a simple   binary promise problem  $B^{l}=(B_{yes}^{l},B_{no}^{l})$ with $B_{yes}^{l}=\{a^ib^{i}\mid i\geq 0\}$ and $B_{no}^{l}=\{a^ib^{i+l}\mid i\geq 0\}$, where $l$ is a fix positive number.
  Clearly, both $B_{yes}^{l}$ and $B_{no}^{l}$ are {\em nonregular languages}. However, we will prove that the promise problem $B^{l}=(B_{yes}^{l},B_{no}^{l})$ can be solved by an exact MOQFA and also by a DFA. See \cite{GY14} for more facts on classical automata solving promise problems.

 \begin{theorem}\label{Th-bi}
The promise problem $B^{l}$ can be solved exactly by a 2 quantum  basis states MOQFA ${\cal M}_{l}$.
\end{theorem}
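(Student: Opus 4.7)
The plan is to exhibit a single-qubit (two basis state) MOQFA in which reading $a$ performs a rotation by a well-chosen angle $\theta$ and reading $b$ performs the inverse rotation by $-\theta$, with trivial (identity) end-marker operators, and with $|0\rangle$ as the unique accepting basis state. The angle $\theta$ will be chosen so that the net rotation produced by the distinguishing suffix of length $l$ is an odd multiple of $\pi/2$, which kills the amplitude on $|0\rangle$.

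More concretely, I would set $\Sigma=\{a,b\}$, $Q=\{|0\rangle,|1\rangle\}$, $Q_a=\{|0\rangle\}$, $U_{|\hspace{-1mm}c}=U_{\$}=I$, and
\begin{equation}
U_a=\begin{pmatrix}\cos\theta & -\sin\theta\\ \sin\theta & \phantom{-}\cos\theta\end{pmatrix},\qquad U_b=U_a^{-1}=\begin{pmatrix}\phantom{-}\cos\theta & \sin\theta\\ -\sin\theta & \cos\theta\end{pmatrix},
\end{equation}
with $\theta=\dfrac{\pi}{2l}$. Then for any input $w$, the global unitary applied to $|0\rangle$ is $U_b^{\,k}U_a^{\,i}|0\rangle$ where $i$ is the number of $a$'s and $k$ the number of $b$'s in $w$, and by additivity of rotations this equals $R(\,(i-k)\theta\,)|0\rangle$, with $R(\varphi)$ the standard rotation by $\varphi$.

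The two verifications are then immediate. For $x=a^ib^i\in B_{yes}^l$ the total rotation angle is $0$, the pre-measurement state is $|0\rangle$, and $\Pr[{\cal M}_l\text{ accepts }x]=1$. For $x=a^ib^{i+l}\in B_{no}^l$ the total rotation angle is $-l\theta=-\pi/2$, so the pre-measurement state is
\begin{equation}
R(-\pi/2)|0\rangle=\cos(\pi/2)|0\rangle-\sin(\pi/2)|1\rangle=-|1\rangle,
\end{equation}
hence $\Pr[{\cal M}_l\text{ accepts }x]=|\langle 0|{-}|1\rangle|^2=0$. Thus ${\cal M}_l$ solves $B^l$ exactly using only two basis states.

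There is essentially no hard step in this argument — the construction is simpler than that of Lemma~\ref{Th-1} because the two languages in $B^l$ are matched only on the $a$-part, so the $b$-suffix directly reveals the distinguishing count $l$ via rotation; no auxiliary third state or amplitude rescaling (as in Equations~(\ref{E-eqs})–(\ref{e-p})) is needed. The only point requiring a tiny bit of care is to choose $\theta$ so that $\cos(l\theta)$ is \emph{exactly} $0$ rather than merely small, which the choice $\theta=\pi/(2l)$ secures. Any other $\theta$ with $l\theta\equiv\pi/2\pmod{\pi}$ would work equally well.
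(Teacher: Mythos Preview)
Your proof is correct and essentially identical to the paper's own argument: the paper also takes $\theta=\pi/(2l)$, sets $U_{|\hspace{-1mm}c}=U_{\$}=I$, $Q_a=\{|0\rangle\}$, and lets $U_a$ and $U_b$ be mutually inverse planar rotations by $\theta$, so that $a^ib^i$ leaves $|0\rangle$ fixed while $a^ib^{i+l}$ maps it to $\pm|1\rangle$. The only cosmetic difference is that you interchange which of the two rotation matrices is assigned to $a$ and which to $b$, yielding $-|1\rangle$ instead of $|1\rangle$ on no-instances; this has no effect on the acceptance probability.
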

\begin{proof}
Let $\theta=\frac{\pi}{2l}$ and  ${\cal M}_{l}=(Q,\Sigma,\{U_{\sigma}\,|\, \sigma\in\Sigma' \},|{0}\rangle,Q_a)$, where $Q=\{|0\rangle,|1\rangle\}$, $Q_a=\{|0\rangle\}$,
\begin{equation}
    U_{a}=\left(
      \begin{array}{cc}
        \cos \theta & \sin \theta   \\
        -\sin \theta  &\cos \theta\\
      \end{array}
    \right),\  U_{b}=\left(
      \begin{array}{cc}
        \cos \theta & -\sin \theta   \\
        \sin \theta  &\cos \theta\\
      \end{array}
    \right),
\end{equation}
and $U_{\$}=U_{|\hspace{-1mm}c}=I$.

If the input  $x\in B_{yes}^{l}$, then the quantum state before the measurement is
\begin{align}
    |q\rangle&=U_{\$}(U_b)^{i}(U_a)^{i}U_{|\hspace{-1mm}c}|0\rangle=\left(
      \begin{array}{cc}
        \cos \theta & -\sin \theta   \\
        \sin \theta  &\cos \theta\\
      \end{array}
    \right)^i\left(
      \begin{array}{cc}
        \cos \theta & \sin \theta   \\
        -\sin \theta  &\cos \theta\\
      \end{array}
    \right)^i|0\rangle\\
    &=\left(
      \begin{array}{cc}
        1 & 0 \\
        0 &1  \\
      \end{array}
    \right)|0\rangle=|0\rangle.
\end{align}

If the input   $x\in B_{no}^{l}$, then the quantum state before the measurement is
\begin{align}
|q\rangle&=U_{\$}(U_b)^{i+l}(U_a)^{i}U_{|\hspace{-1mm}c}|0\rangle=\left(
      \begin{array}{cc}
        \cos \theta & -\sin \theta   \\
        \sin \theta  &\cos \theta\\
      \end{array}
    \right)^{i+l}\left(
      \begin{array}{cc}
        \cos \theta & \sin \theta   \\
        -\sin \theta  &\cos \theta\\
      \end{array}
    \right)^i|0\rangle\\
    &=\left(
      \begin{array}{cc}
        \cos \theta & -\sin \theta   \\
        \sin \theta  &\cos \theta\\
      \end{array}
    \right)^{l}|0\rangle=\left(
      \begin{array}{cc}
        \cos l\theta & -\sin l\theta   \\
        \sin l\theta  &\cos l\theta\\
      \end{array}
    \right)|0\rangle\\
    &=\left(
      \begin{array}{cc}
        \cos \pi/2 & -\sin \pi/2  \\
        \sin \pi/2  &\cos \pi/2\\
      \end{array}
    \right)|0\rangle=\left(
      \begin{array}{cc}
        0 & -1  \\
        1  &0\\
      \end{array}
    \right)\left(
      \begin{array}{c}
        1   \\
        0\\
      \end{array}
    \right)=\left(
      \begin{array}{c}
        0   \\
        1\\
      \end{array}
    \right)=|1\rangle.
 \end{align}

Therefore we can get the exact result after the measurement in the standard basis $\{|0\rangle,|1\rangle\}$.

\end{proof}

 \begin{theorem}\label{Th-9}
  For any fixed $l$, the  minimal DFA solving the promise problem $B^{l}$ has $d$ states, where $d$ is the smallest integer such that $d\nmid l$.
\end{theorem}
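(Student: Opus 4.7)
The plan is to match an upper bound construction to a lower bound pigeonhole argument, following the template used for Theorem \ref{Th-6}.

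For the upper bound, I would construct a $d$-state DFA $\mathcal{A} = (S, \{a,b\}, \delta, s_0, F)$ with $S = \{s_0, s_1, \ldots, s_{d-1}\}$, $F = \{s_0\}$, and transitions $\delta(s_i, a) = s_{(i+1)\bmod d}$ and $\delta(s_i, b) = s_{(i-1)\bmod d}$. On any $a^i b^i \in B_{yes}^l$ the automaton ends in $s_{(i-i)\bmod d} = s_0 \in F$, while on $a^i b^{i+l} \in B_{no}^l$ it ends in $s_{(-l)\bmod d}$, which differs from $s_0$ precisely because $d \nmid l$.

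For the lower bound, let $\mathcal{A}'$ be any DFA with $m$ states solving $B^l$, and set $q_i = \widehat{\delta}(s_0, a^i)$. The sequence $(q_i)_{i\geq 0}$ takes values in a set of size $m$, so it is ultimately periodic with some eventual period $t$; moreover the cycle alone contains $t$ distinct states, forcing $m \geq t$. I claim $t \nmid l$. Suppose instead $t \mid l$ and pick any index $n$ beyond the pre-period, so $q_{n+l} = q_n$. Then the inputs $a^n b^{n+l}$ and $a^{n+l} b^{n+l}$ drive $\mathcal{A}'$ from its start to the same final state, yet the first lies in $B_{no}^l$ (must be rejected) and the second in $B_{yes}^l$ (must be accepted), a contradiction. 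Hence $t \nmid l$, and because every positive integer smaller than $d$ divides $l$ by minimality of $d$, we conclude $t \geq d$ and therefore $m \geq t \geq d$.

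The step requiring the most care is choosing the pair of witnesses $a^n b^{n+l}$ and $a^{n+l} b^{n+l}$: they land on opposite sides of the promise but, by the hypothesized periodicity, must drive the DFA into the same state; this is the binary analogue of the contradiction exploited in the unary lollipop argument illustrated in Figure \ref{f1}. After this the reasoning is routine combinatorial bookkeeping essentially identical to that of Theorem \ref{Th-6}.
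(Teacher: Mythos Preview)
Your proof is correct. The upper-bound construction is identical to the paper's. For the lower bound you exploit the eventual period of the $a$-transitions and compare $a^{n+l}b^{n+l}\in B_{yes}^{l}$ with $a^{n}b^{n+l}\in B_{no}^{l}$, whereas the paper exploits the eventual $b$-cycle and compares $a^{i}b^{i}$ with $a^{i}b^{i+l}$. These are dual instantiations of the same pigeonhole-on-a-cycle argument; your version has the mild advantage that the $a$-run from the start state forms a single lollipop exactly as in Figure~\ref{f1}, so no extra care is needed, while the paper must choose $i$ large enough that $\widehat{\delta}(s_0,a^{i}b^{i})$ already lies on the $b$-cycle.
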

\begin{figure}[htbp]
 \centering\epsfig{figure=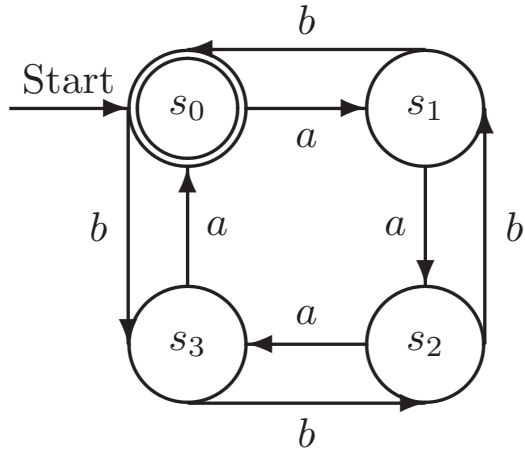,width=80mm}\\
 \centering\caption{DFA ${\cal A}$ when $d=4$.}\label{f2}
\end{figure}

\begin{proof}
Let us consider a $d$-state DFA ${\cal A}=(S,\Sigma,\delta,s_0,F)$, where $S=\{s_i|0\leq i<d\}$, $F=\{s_0\}$ and the transition function is defined as follows:
\begin{enumerate}
  \item $\delta(s_i,a)=s_{(i+1)\,{\it mod}\,d}$ and
  \item $\delta(s_i,b)=s_{(i-1)\,{\it mod}\,d}$.
\end{enumerate}

For example, if $d=4$, then the corresponding DFA ${\cal A}$ is as shown in Figure \ref{f2}.
It is easy to verify that the language recognized by DFA ${\cal A}$ is  $L=\{x\in\{a,b\}^*\mid \#_a(x)\equiv \#_b(x)\,{\it mod}\,d\}$, where $\#_a(x)$ ($\#_b(x)$) is the number of symbols $a$s ($b$s) in $x$.

Let $n=\#_a(x)$ and $m=\#_b(x)$. If the input string  $x\in B_{yes}^{l}$, then $n=m$. Therefore, we have $n\equiv m\,{\it mod}\,d$ and the DFA ${\cal A}$ accepts the input.

 If the input string  $x\in B_{no}^{l}$, then $m=n+l$.  Since $d\nmid l$, we have $n\not\equiv n+l \,{\it mod}\,d$ and the DFA ${\cal A}$ rejects the input.

 Therefore, the promise problem  $B^{l}$ can be solved by the DFA ${\cal A}$ and  its corresponding minimal DFA has no more than $d$ states.

 Now suppose that there is a $c$-state minimal DFA ${\cal A}'$ solving the promise problem  $B^{l}$ and $c<d$.
 For sufficient large $i$, after reading $a^*b^i$, ${\cal A}'$  will enter a cycle. Now suppose that the cycle has $t$ states, says $r_0,\ldots,r_{t-1}$ such that $\delta(r_j,b)=r_{(j+1)\,{\it mod}\,t}$. Since $t\leq c<d$, then we have $t\mid l$.  Let $r_k=\widehat{\delta}(s_0,a^ib^i)$. We have
 \begin{equation}
  \widehat{\delta}(s_0,a^ib^{i+l})=\widehat{\delta}(r_k,b^{l})=r_k,
 \end{equation}
 what means that the DFA ${\cal A}'$ accepts (or rejects) both $a^ib^i$ and $a^ib^{i+l}$ -- a contradiction. Therefore, $c\geq d$ and  the theorem has been proved.

\end{proof}

We consider now a more general promise problem $B^{N,\,l}=(B_{yes}^{N,\,l},B_{no}^{N,\,l})$ with $B_{yes}^{N,\,l}=\{a^ib^{i}\mid i\geq 0\}$ and $B_{no}^{N,\,l}=\{a^ib^{i+jN+l}\mid i,j\geq 0\}$, where $N$ and $l$ are fixed nonnegative integers such that  $0<l<N$.

 \begin{theorem}\label{Th-10}
The promise problem $B^{N,\,l}$ can be solved exactly by a 3 quantum basis states MOQFA ${\cal M}_{N,\,l}$.
\end{theorem}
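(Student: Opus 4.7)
The plan is to combine the cancellation idea used in Theorem~\ref{Th-bi} with the three basis state construction used in Theorem~\ref{Th-3}. The guiding intuition is that the prefix $a^ib^i$ that is common to both $B^{N,l}_{yes}$ and $B^{N,l}_{no}$ should contribute trivially to the computation, so that the behaviour of the automaton is fully determined by the suffix $b^{jN+l}$, which brings us back to the unary situation $A^{N,l}$ already handled in Theorem~\ref{Th-3}.

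Concretely, I would take the alphabet $\Sigma=\{a,b\}$, the state set $Q=\{|0\rangle,|1\rangle,|2\rangle\}$, accepting set $Q_a=\{|0\rangle\}$, and define the transition matrices as
\begin{equation}
U_{|\hspace{-1mm}c}=\begin{pmatrix}\alpha & -\beta & 0\\ \beta & \alpha & 0\\ 0 & 0 & 1\end{pmatrix},\quad
U_a=\begin{pmatrix}1 & 0 & 0\\ 0 & \cos\theta & -\sin\theta\\ 0 & \sin\theta & \cos\theta\end{pmatrix},\quad
U_b=U_a^{-1},\quad
U_{\$}=U_{|\hspace{-1mm}c}^{-1},
\end{equation}
where $\alpha,\beta$ are defined as in Equation~(\ref{e-p}) with $p=\cos l\theta$, and the angle $\theta$ is chosen exactly as in the proof of Theorem~\ref{Th-3}: $\theta=\frac{2\pi}{N}$ when $\frac{N}{4}\le l\le\frac{3N}{4}$, $\theta=\frac{2\pi}{N}\lceil\frac{N}{4l}\rceil$ when $l<\frac{N}{4}$, and $\theta=\frac{q\cdot 2\pi}{N}$ with the integer $q$ produced by the preceding lemma when $l>\frac{3N}{4}$. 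In every case one has $N\theta\in 2\pi\mathbb{Z}$ and $p=\cos l\theta\le 0$, so the system (\ref{E-eqs}) admits the real solution~(\ref{e-p}).

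Then I would carry out two verifications. First, on input $a^ib^i\in B^{N,l}_{yes}$ the operator applied to $|0\rangle$ is
$U_{\$}(U_b)^i(U_a)^iU_{|\hspace{-1mm}c}=U_{\$}U_{|\hspace{-1mm}c}=I$, since $U_b=U_a^{-1}$; hence the final state is $|0\rangle$ and the input is accepted with probability $1$. Second, on input $a^ib^{i+jN+l}\in B^{N,l}_{no}$ the same cancellation reduces $(U_b)^{i+jN+l}(U_a)^i$ to $(U_b)^{jN+l}=(U_a)^{-(jN+l)}$, and because $N\theta\in 2\pi\mathbb{Z}$ the operator $(U_a)^N$ acts as the identity on the $\{|1\rangle,|2\rangle\}$ block, so this further collapses to $(U_a)^{-l}$. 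From this point on the computation is literally the one in Lemma~\ref{Th-1}: one expands
\begin{equation}
U_{\$}(U_a)^{-l}U_{|\hspace{-1mm}c}|0\rangle = U_{\$}\bigl(\alpha|0\rangle+\cos l\theta\cdot\beta|1\rangle+\gamma|2\rangle\bigr) = U_{\$}\bigl(\alpha|0\rangle+p\beta|1\rangle+\gamma|2\rangle\bigr),
\end{equation}
and the choice of $\alpha,\beta$ ensures that the amplitude on $|0\rangle$ after applying $U_{\$}$ is exactly $\alpha^2+p\beta^2=0$, so the input is rejected with probability $1$.

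I do not expect any serious obstacle: once the rotation on the $b$s is chosen to undo the rotation on the $a$s, the non-trivial residue $jN+l$ sees only the $\theta$-periodicity of $U_a$ on the rotating block, and we inherit the $p$-solving/exact upgrade machinery of Lemma~\ref{Lm-1}. The only mild point to be careful about is to reuse the exact case analysis of $\theta$ from Theorem~\ref{Th-3} so that $\cos l\theta\le 0$ holds for all $0<l<N$; this is precisely what makes the pair $(\alpha,\beta)$ of~(\ref{e-p}) real, and nothing beyond that is needed.
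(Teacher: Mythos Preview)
Your proposal is correct and follows essentially the same approach as the paper's proof: the paper also chooses $\theta$ via the case analysis of Theorem~\ref{Th-3}, sets $U_b=U_a^{-1}$ so that the common prefix $a^ib^i$ cancels, and then invokes the computation of Lemma~\ref{Th-1} on the residual $(U_b)^{l}$ to kill the $|0\rangle$-amplitude via $\alpha^2+p\beta^2=0$. The only cosmetic difference is that the paper swaps the rotation directions of $U_a$ and $U_b$ relative to your choice, so that its residual is $(U_b)^l$ rather than your $(U_a)^{-l}$; since $\cos$ is even this is immaterial.
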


\begin{proof}
We choose $\theta$  depending on  $N$ and $l$ as follows (we use the same strategy as  in Theorem \ref{Th-3}) :
\begin{enumerate}
  \item If $l<\frac{N}{4}$, then  $\theta=\frac{2\pi}{N}\lceil \frac{N}{4l}\rceil$.
  \item If  $\frac{N}{4}\leq l\leq \frac{3N}{4}$, then  $\theta=\frac{2\pi}{N}$.
  \item If $l>\frac{3N}{4}$,  we  first find out an integer $j$ such that $\frac{1}{4}<j\frac{N-l}{l}-\left\lfloor j\frac{N-l}{l} \right\rfloor <\frac{2}{3}$, and then we  choose $p=\left\lfloor \frac{N}{l}\left(j+\frac{1}{4}\right)\right\rfloor+1$ and $\theta=\frac{p2\pi}{N}$.
\end{enumerate}

We can design an MOQFA ${\cal A}$ with two quantum states as the one in Theorem \ref{Th-bi}. It is easy to see that  ${\cal A}$ $p$-solves the promise problem $B^{N,\,l}$, where $p=\cos l\theta\leq 0$.
According to Equality (\ref{e-p}),
we have $\alpha=\sqrt{\frac{-p}{1-p}}=\sqrt{\frac{-\cos l\theta}{1-\cos l\theta}}$ and $\beta=\sqrt{\frac{1}{1-p}}=\sqrt{\frac{1}{1-\cos l\theta}}$.

Let ${\cal M}_{N,\,l}=(Q,\Sigma,\{U_{\sigma}\,|\, \sigma\in\Sigma' \},|{0}\rangle,Q_a)$, where $Q=\{|0\rangle,|1\rangle,|2\rangle\}$, $Q_a=\{|0\rangle\}$,
\begin{equation}
    U_{|\hspace{-1mm}c}=\left(
      \begin{array}{ccc}
        \alpha \ & -\beta \ & 0  \\
        \beta \ & \alpha \ & 0   \\
        0 \ &  0  \  &1\\
      \end{array}
    \right), \
    U_{a}=\left(
      \begin{array}{ccc}
        1 & 0 & 0  \\
        0 & \cos \theta & \sin \theta   \\
        0 &  -\sin \theta  &\cos \theta\\
      \end{array}
    \right),\   U_{b}=\left(
      \begin{array}{ccc}
        1 & 0 & 0  \\
        0 & \cos \theta & -\sin \theta   \\
        0 &  \sin \theta  &\cos \theta\\
      \end{array}
    \right)
\end{equation}
and $U_{\$}=U_{|\hspace{-1mm}c}^{-1}$.

If the input  $x\in B_{yes}^{N,\,l}$, then the quantum state before the measurement is
\begin{align}
    |q\rangle&=U_{\$}(U_b)^{i}(U_a)^{i}U_{|\hspace{-1mm}c}|0\rangle=U_{\$}\left(
      \begin{array}{ccc}
        1 & 0 & 0  \\
        0 & \cos \theta & -\sin \theta   \\
        0 &  \sin \theta  &\cos \theta\\
      \end{array}
    \right)^i\left(
      \begin{array}{ccc}
        1 & 0 & 0  \\
        0 & \cos \theta & \sin \theta   \\
        0 &  -\sin \theta  &\cos \theta\\
      \end{array}
    \right)^iU_{|\hspace{-1mm}c}|0\rangle\\
    &=U_{\$}U_{|\hspace{-1mm}c}|0\rangle=|0\rangle.
\end{align}

If the input   $x\in B_{no}^{N,\,l}$, then the quantum state before the measurement is
\begin{align}
|q\rangle&=U_{\$}(U_b)^{i+jN+l}(U_a)^{i}U_{|\hspace{-1mm}c}|0\rangle=U_{\$}(U_b)^{jN+l}U_{|\hspace{-1mm}c}|0\rangle=U_{\$}(U_b)^{l}U_{|\hspace{-1mm}c}|0\rangle\\
&=\gamma_1|1\rangle+\gamma_2|2\rangle,  \ \ \ \ \ \  \ \ \ \ \ \ [\mbox{ Eqs. (\ref{e1}) to (\ref{e2}) }]
 \end{align}
 where $\gamma_1$ and $\gamma_2$ are  amplitudes that we do not need to specify  exactly.

Since the amplitude of $|0\rangle$ in the quantum state $|q\rangle $ is 0, we  get always the exact result after the measurement in the standard basis $\{|0\rangle,|1\rangle,|2\rangle\}$.
\end{proof}

 \begin{theorem}
   For any fixed $l$, the  minimal DFA solving the promise problem $B^{N,\,l}$ has $d$ states, where $d$ is the  smallest integer such that $d\mid N$ and $d\nmid l$.
\end{theorem}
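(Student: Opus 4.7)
The plan is to establish the two matching bounds separately, following the template already used in Theorem~\ref{Th-9}, but incorporating the extra modulus $N$ in both the construction and the cycle argument.

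For the upper bound, I would construct a $d$-state DFA ${\cal A}$ whose states $\{s_0,s_1,\dots,s_{d-1}\}$ track $(\#_a(x)-\#_b(x))\bmod d$, with transitions $\delta(s_i,a)=s_{(i+1)\bmod d}$, $\delta(s_i,b)=s_{(i-1)\bmod d}$, and $F=\{s_0\}$. For $x=a^ib^i\in B_{yes}^{N,\,l}$ the difference is $0$, so the computation ends in $s_0\in F$. For $x=a^ib^{i+jN+l}\in B_{no}^{N,\,l}$ the difference is $-(jN+l)$; since $d\mid N$ this reduces mod $d$ to $-l$, which is nonzero mod $d$ because $d\nmid l$, so the computation ends outside $F$. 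Hence ${\cal A}$ solves $B^{N,\,l}$ with $d$ states.

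For the lower bound I would argue by contradiction: suppose a DFA ${\cal A}'$ with $c<d$ states solves $B^{N,\,l}$. Pick $m$ large (say $m\geq c$) and let $q=\widehat{\delta}(s_0,a^m)$. The sequence $q,\delta(q,b),\delta(q,b^2),\dots$ is eventually periodic with some preperiod $p_0$ and period $t\leq c<d$; for $m\geq p_0$ the state $\widehat{\delta}(q,b^n)$ depends only on $n\bmod t$ whenever $n\geq p_0$. The accepting/rejecting requirement forces $\widehat{\delta}(q,b^m)$ to be accepting (from $a^mb^m$) and $\widehat{\delta}(q,b^{m+jN+l})$ to be rejecting for every $j\geq 0$ (from $a^mb^{m+jN+l}$). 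A contradiction follows as soon as I can find $j\geq 0$ with $jN+l\equiv 0\pmod t$ and $m+jN+l\geq p_0$, because then both states coincide.

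The decisive step, and the place where the hypothesis on $d$ really enters, is showing that such a $j$ exists. By elementary number theory the congruence $jN\equiv -l\pmod t$ is solvable iff $g:=\gcd(N,t)$ divides $l$. Here I would invoke the minimality of $d$: since $g\mid N$ and $g\leq t<d$, the integer $g$ cannot itself satisfy ``$g\mid N$ and $g\nmid l$'', hence $g\mid l$. This gives infinitely many solutions $j$, so one can be chosen large enough for $m+jN+l\geq p_0$. I expect this number-theoretic minimality step to be the main obstacle requiring care; everything else is a direct adaptation of the proof of Theorem~\ref{Th-9}, with $t\mid l$ there replaced by $\gcd(N,t)\mid l$ here.
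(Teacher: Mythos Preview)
Your proposal is correct and follows essentially the same route as the paper: the same $d$-state cyclic DFA for the upper bound, and the same ``enter a $b$-cycle of length $t<d$ and find $j$ with $t\mid jN+l$'' contradiction for the lower bound. The one organizational difference is that the paper works throughout with the equivalent description ``$d$ is the smallest integer with $d\nmid(pN+l)$ for all $p$'' (so the existence of such a $j$ is immediate from $t<d$) and only at the end invokes Theorem~\ref{Th-6} to translate this back to ``$d\mid N$ and $d\nmid l$''; you instead keep the stated characterization and supply the $\gcd$ argument ($g=\gcd(N,t)\mid N$, $g<d$, hence $g\mid l$) directly, which makes your lower bound self-contained.
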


\begin{proof}
Let $d$ be the  smallest integer such that $d\nmid (pN+l)$  for any integer $p$.
We consider a minimal DFA ${\cal A}=(S,\Sigma,\delta,s_0,F)$ accepting the language $L=\{x\in\{a,b\}^*\mid \#_a(x)\equiv \#_b(x)\,{\it mod}\,d\}$.

Let $n=\#_a(x)$ and $m=\#_b(x)$.  If the input $x\in B_{yes}^{N,\,l}$, then $n=m$. Therefore, we have $n\equiv m\,{\it mod}\,d$ and  ${\cal A}$ accepts the input.
If the input string  $x\in B_{no}^{N,\,l}$, then $m=n+pN+l$.  Since $d\nmid (pN+l)$, we have $n\not\equiv n+pN+l \,{\it mod}\,d$ and   ${\cal A}$ rejects the input.
 According to the proof of Theorem \ref{Th-9}, ${\cal A}$ has no more than $d$ states.

  Now suppose that there is a $c$-state minimal DFA ${\cal A}'$ solving the promise problem  $B^{N,\,l}$ and $c<d$.
 For sufficiently big  $i$, after reading $a^*b^i$, ${\cal A}'$   enters a cycle. Now suppose that the cycle has $t$ states, says $r_0,\ldots,r_{t-1}$ such that $\delta(r_j,b)=r_{(j+1)\,{\it mod}\,t}$. Since $t\leq c<d$,  there exists an integer $p$ such that $t\mid pN+l$.  Let $r_k=\widehat{\delta}(s_0,a^ib^i)$. We have
 \begin{equation}
  \widehat{\delta}(s_0,a^ib^{i+pN+l})=\widehat{\delta}(r_k,b^{pN+l})=r_k,
 \end{equation}
 which means that ${\cal A}'$ accepts (or rejects) both $a^ib^i$ and $a^ib^{i+Np+l}$ -- a contradiction. Therefore, $c\geq d$ and  the  minimal DFA solving the promise problem $B^{N,\,l}$ has to have $d$ states.

 According to the proof of  Theorem \ref{Th-6}, $d$ is  the  smallest integer such that $d\mid N$ and $d\nmid l$.
\end{proof}

\begin{remark}
Obviously, if $N$ is a prime, then $d=N$.  If $\gcd(N,\,l)=1$, then $d$ is the smallest integer  greater than 1 that divides $N$.
\end{remark}

\section{Conclusion and discussion}

Ambainis and Yakary{\i}lmz \cite{AmYa11} presented a family of promise problem,  i.e
\begin{equation}
    \left\{A^{N,\,l}=(A_{yes}^{N,\,l}=\{a^{iN}\},A_{no}^{N,\,l}=\{a^{iN+l}\}\,|\,\ N=2^{k+1}, l=2^k, i>0,\linebreak[0] \mbox{\ and \ } k>0 )\right\},
\end{equation}
and they proved that each promise problem can be solved exactly by an MOQFA with 2 quantum basis states, whereas the sizes of the corresponding DFAs are at least $N$.
Based on the techniques given in Ambainis \cite{Amb13},  we have generalized the result  in this paper, i.e we have proved that $N$ and $l$ can be any fixed positive integers such that $0< l<N$. We have given an exact MOQFA with 3 quantum basis states and a minimal DFA for  the new promise problems. Moreover we have proved  some similar results on two families of  binary promise problems in this paper.

Finite automata and the other restricted power quantum computing models to solve promise problems were intensively studied in the last several years \cite{AGKY14,AmYa11,BMP14,GY14,GDZ14,Nak14,RY14,Zhg12,Zhg13,Zhg14}.
 The method presented in this paper may be helpful  in finding more exact quantum finite automata or algorithms for other promise problems. Using our method, it is not hard to generalize the results from \cite{AGKY14}.

In theorems \ref{Th-3}, \ref{Th-7} and \ref{Th-10}, the number of quantum  basis states used in the corresponding  MOQFA is three. Actually, these results are not optimal.   Since the quantum states we use are in $\mathbb{R}^3$, they can be simulated in $\mathbb{C}^2$ that is by a qubit  \cite{Amb02}. Therefore the results can be improved to 2 quantum  basis states providing the ultimately optimal outcomes. We can refer the reader to \cite{Amb02} for details of such a technique.

The sizes of probabilistic finite automata and two-way nondeterministic finite automata   for the promise problem $A^k$ have been  studied in  \cite{GY14,RY14}.
In this paper we have only shown minimal DFAs for the promise problems introduced in this paper. It is not hard to determine the minimum amount of states required by bounded-error probabilistic finite automata solving the promise problems. We refer the reader to  \cite{RY14} for the proof technique.
It is also possible to determine the minimum amount of states required by two-way nondeterministic finite automata solving the promise problems. The reader can check \cite{GY14} for the proof technique. Actually, the results on the promise problem $A^{N,\,l}$ have been given recently in \cite{BMP14}.

Finally, we give a  problem for future research.
 Most of the more than polynomial speed-up (space efficient) results for exact quantum computing hold only in the case that we choose  special structures for ``yes" inputs and ``no" inputs.  For example, in this paper, we choose $A_{yes}^{N,\,l}=\{a^{iN}\,|\,\ i\geq 0\}$ and $A_{no}^{N,\,l}=\{a^{iN+l}\,|\,\ i\geq 0\}$, where $N$ and $l$ are  fixed positive integers.
      What if $l$ is not fixed? Says $n_1<l<n_2$, where $n_1<n_2<N$. Can we still get  similar exact quantum computing results?

\section*{Acknowledgements}
The authors are thankful to the anonymous referees   for their  careful reading, comments and suggestions that greatly helped to improve the results and the quality of the presentation.
We thank also Feidiao Yang and  Xiangfu Zou for their comments to original proof of Theorem \ref{Th-6}.
We thank also Carlo Mereghetti for sending us a copy of the paper \cite{BMP14}.
Gruska and Zheng were supported by
the Employment of Newly Graduated Doctors of Science for Scientific Excellence project/grant (CZ.1.07./2.3.00\linebreak[0]/30.0009)  of Czech Republic.
Qiu  was supported by the National
Natural Science Foundation of China (Nos. 61272058, 61073054).


\begin{thebibliography}{ABCD}

 \bibitem{AGKY14}  F.~Ablayev, A.~Gainutdinova, K.~Khadiev and A.~Yakary{\i}lmaz, Very narrow quantum OBDDs and width hierarchies for classical OBDDs,  {\it Proceedings of the 16th  DCFS},
 LNCS {\bfseries 8614} (2014), pp.~53--64. Also
 arXiv:1405.7849.





  \bibitem{Amb02} A.~Ambainis and J.~Watrous, Two-way finite automata with
quantum and classical states, {\it Theoretical Computer Science} {\bfseries 287}
(2002) 299--311.



\bibitem{Amb98} A.~Ambainis and R.~Freivalds, One-way quantum finite automata: strengths, weaknesses and generalizations, {\it
Proceedings of the 39th FOCS} (1998), pp.~332--341.

\bibitem{AmbNay02} A.~Ambainis, A.~Nayak, A.~Ta-Shma and U.~Vazirani, Dense quantum coding and quantum
automata, {\it Journal of the ACM} {\bfseries 49} (2002) 496--511.


\bibitem{Amb09} A.~Ambainis and N.~Nahimovs, Improved constructions of
quantum automata, {\it Theoretical Computer Science} {\bfseries 410} (2009) 1916--1922.

 \bibitem{AmYa11} A.~Ambainis and A.~Yakary{\i}lmaz, Superiority of exact quantum automata for promise problems, {\it Information Processing Letters}  {\bfseries 112}  (2012) 289--291.

 \bibitem{Amb13}
A.~Ambainis, Superlinear advantage for exact quantum algorithms, {\it Proceedings of 45th STOC} (2013), pp.~891--900.

\bibitem{AISJ13}
A.~Ambainis, A.~Iraids and J.~Smotrovs,
Exact quantum query complexity of EXACT and THRESHOLD, {\it Proceedings of 8th TQC} (2013), pp.~263--269.


\bibitem{AGZ14} A.~Ambainis, J.~Gruska and  S.G~Zheng, Exact quantum algorithms have advantage for almost all Boolean functions, {\it Quantum Information \& Computation} {\bfseries 15} (2015) 0435--0452.  Also arXiv:1404.1684.

\bibitem{BMP03} A.~Bertoni, C.~Mereghetti and B.~Palano, Golomb rulers and difference sets for succinct quantum automata, {\it International Journal of Foundations of Computer Science} {\bfseries 14} (2003)  871--888.

\bibitem {Ber05} A.~Bertoni, C.~Mereghetti and B.~Palano, Small size quantum automata recognizing
some regular languages, {\it Theoretical Computer Science} {\bfseries 340} (2005) 394--407.



\bibitem{BMP06}  A.~Bertoni, C.~Mereghetti and B.~Palano,  Some formal tools for analyzing quantum automata, {\it  Theoretical Computer Science} {\bfseries 356} (2006)  14--25.

\bibitem{BMP14} M.P.~Bianchi,  C.~Mereghetti and B.~Palano,  Complexity of Promise Problems on Classical
and Quantum Automata,  {\it Gruska Festschrift}, LNCS {\bfseries 8808} (2014),  to apper.


\bibitem{Bro99}A.~Brodsky and N.~Pippenger, Characterizations of 1-way quantum finite automata, {\it SIAM Journal on Computing}  {\bfseries 31} (2002) 1456--1478.

  \bibitem{BH97}
G.~Brassard and  P.~H{\o}yer,
 An exact quantum polynomial-time algorithm for Simon's problem,
{\it  Proceedings of the Israeli Symposium on Theory of Computing and Systems} (1997), pp.~12--23.

\bibitem{Buh98} H.~Buhrman, R.~Cleve and A.~Wigderson, Quantum vs. classical communication and
computation, {\it Proceedings of 30th STOC} (1998), pp.~63--68.

\bibitem{BCWZ99}
H.~Buhrman, R.~Cleve, R.~de~Wolf and C.~Zalka,
 Bounds for small-error and zero-error quantum algorithms,
{\it Proceedings of the 40th FOCS} (1999), pp.~358--359.

  \bibitem{CEMM98}
R.~Cleve, A.~Ekert, C.~Macchiavello and M.~Mosca,
Quantum algorithms revisited, {\it Proceedings of the
Royal Society of London} {\bfseries A454} (1998), pp.~339--354.


    \bibitem{DJ92}
D.~Deutsch and R.~Jozsa, Rapid solution of problems by quantum computation,
{\it Proceedings of the Royal Society of London} {\bfseries  A439} (1992), pp.~553--558.

\bibitem{Fre09} R.~Freivalds,  M.~Ozols and L.~Mancinska, Improved constructions of mixed state quantum automata, {\it Theoretical Computer Science} {\bfseries 410}  (2009) 1923--1931.






\bibitem{GY14} V.~Geffert and A.~Yakary{\i}lmaz, Classical automata on promise problems, {\it  Proceedings of the 16th  DCFS}, LNCS {\bfseries 8614} (2014), pp.~126--137.
  Also arXiv:1405.6671.


\bibitem{Gh06} O.~Goldreich, On promise problems: A survey, {\it Shimon Even Festschrift}, LNCS {\bfseries 3895} (2006), pp,~254--290.

\bibitem{Gru99} J.~Gruska, {\it Quantum Computing}, (McGraw-Hill,
London, 1999).

\bibitem{Gru00} J.~Gruska, Descriptional complexity issues in quantum computing, {\it Journal of Automata, Languages and Combinatorics} {\bfseries 5}  (2000) 191--218.

\bibitem{GDZ14}  J.~Gruska, D.W.~Qiu and S.G.~Zheng,  Generalizations of the distributed Deutsch-Jozsa promise problem,    arXiv:1402.7254 (2014).




\bibitem{Hop}J.E.~Hopcroft and J.D.~Ullman,  {\it Introduction to Automata Theory}, Languages, and
Computation, (Addision-Wesley, New York, 1979).

\bibitem{Hir10} M.~Hirvensalo, Quantum automata with open time evolution, {\it International Journal of Natural Computing Research} {\bfseries 1} (2010) 70--85.


 \bibitem{Kla00} H.~Klauck, On quantum and probabilistic communication: Las Vegas and one-way protocols, {\it Proceedings  of the 32th  STOC}  (2000), pp.~644-651.


\bibitem{Kon97} A.~Kondacs and J.~Watrous, On the power of quantum
finite state automata, {\it Proceedings of the 38th FOCS} (1997), pp.~66--75.

\bibitem{LiQiu09} L.Z.~Li, D.W.~Qiu, X.F.~Zou, L.J.~Li, L.H.~Wu and P.~Mateus, Characterizations of one-way general
quantum finite automata, {\it Theoretical Computer Science} {\bfseries 419} (2012) 73--91.

\bibitem{MJM11}
A.~Montanaro, R.~Jozsa and G.~Mitchison,
 On exact quantum query complexity,
{\it Algorithmica},  doi:10.1007/s00453-013-9826-8 (2013).
Also  arXiv:1111.0475.






\bibitem{MNYW05}
Y.~Murakami, M.~Nakanishi, S.~Yamashita and K.~Watanabe, Quantum versus classical pushdown automata in exact computation,
{\it IPSJ Digital Courier} {\bfseries 1}  (2005) 426--435.

\bibitem{Nak14} M.~Nakanishi, Quantum Pushdown Automata with a Garbage Tape, arXiv:1402.3449 (2014).


\bibitem{Moo97} C.~Moore and J.P.~Crutchfield, Quantum automata and quantum grammars, {\it Theoretical Computer Science}
 {\bfseries 237} (2000) 275--306.

\bibitem{Nie00} M.A.~Nielsen and  I.L.~Chuang, {\it Quantum
Computation and Quantum Information}, (Cambridge University Press,
Cambridge, 2000).

\bibitem{Qiu12} D.W.~Qiu, L.Z.~Li, P.~Mateus and J.~Gruska, Quantum finite automata,  {\it CRC Handbook of Finite State Based Models and Applications}, (CRC Press, 2012), pp.~113--144.

\bibitem{Qiu14} D.W.~Qiu, L.Z.~Li, P.~Mateus and A.~Sernadas, Exponentially more concise quantum recognition of non-RMM languages, {\it Journal of Computer and System Sciences},  doi: 10.1016/j.jcss.2014.06.008. Also
arXiv:0909.1428.

\bibitem{RY14} J.~Rashid and A.~Yakary{\i}lmaz, Implications of quantum automata for contextuality,  {\it  Proceedings  of the 19th CIAA}, LNCS {\bfseries 8587} (2014), pp.~318--331. Also arXiv:1404.2761.


 \bibitem{Sim97} D.~Simon ,
 On the power of quantum computation,
{\it SIAM Journal on Computing} {\bfseries  26} (1997) 1474--1483.






\bibitem{Yak11} A.~Yakary{\i}lmaz and  A.C.~Cem Say, Unbounded-error quantum computation
with small space bounds, {\it Information and Computation} {\bfseries 209} (2011) 873--892.


\bibitem{Yak10} A.~Yakary{\i}lmaz and  A.C.~Cem Say, Succinctness of two-way probabilistic
and quantum finite automata, {\it Discrete Mathematics and Theoretical
Computer Science}  {\bfseries 12}  (2010) 19--40.


\bibitem{ZhgQiu112} S.G.~Zheng, D.W.~Qiu, L.Z.~Li and J.~Gruska,  One-way finite automata with quantum and classical
states,
 {\it Dassow Festschrift}, LNCS {\bfseries 7300}  (2012), pp.~273--290. Also arXiv:1112.2022.



\bibitem{Zhg12} S.G.~Zheng, D.W.~Qiu, J.~Gruska, L.Z.~Li and P.~Mateus, State succinctness of two-way finite automata with quantum and classical
states, {\it Theoretical Computer Science} {\bfseries 499} (2013) 98--112. Also  arXiv:1202.2651.


\bibitem{ZhgQiu11} S.G.~Zheng,   D.W.~Qiu and  L.Z.~Li, Some languages recognized by two-way finite automata with quantum and classical states, {\it International Journal of Foundation of Computer Science} {\bfseries 23} (2012) 1117--1129.  Also  arXiv:1112.2844.

\bibitem{Zhg13a}  S.G.~Zheng,   J.~Gruska and   D.W.~Qiu,  Power of the interactive proof systems with verifiers
                modeled by semi-quantum two-way finite automata, arXiv:1304.3876 (2013).

\bibitem{Zhg13} S.G.~Zheng,  J.~Gruska and D.W.~Qiu, On the state complexity of semi-quantum finite automata, {\it RAIRO-Theoretical Informatics and Applications} {\bfseries 48} (2014)  187--207. Earlier version in LATA'14. Also  arXiv:1307.2499.

\bibitem{Zhg14} S.G.~Zheng and D.W.~Qiu,  From quantum query complexity to state complexity, {\it Gruska Festschrift}, LNCS {\bfseries 8808} (2014), pp.~231--245.  Also  arXiv:1407.7342.




\end{thebibliography}
\end{document}